\newcommand{\np}[2]{\ensuremath{\langle\,#1\mid #2 \,\rangle}}
\newcommand{\p}[1]{\ensuremath{\langle\, #1\,|}}
\newcommand{\cd}[1]{\ensuremath{|#1\,\rangle}}
\newcommand{\x}[1]{\ensuremath{|\,#1\rangle\,\langle\, #1 \,|}}
\newcommand{\dd}{\mathrm{d}}
\newtheorem{thm}{Proposition}
\theoremstyle{nonumberplain}
\newtheorem{proof}{Proof}
\journal{Journal of \LaTeX\ Templates}
\begin{document}

\begin{frontmatter}

\title{Quantum Probability Theoretic Asset Return Modeling:  A Novel Schrödinger-Like Trading Equation and Multimodal Distribution}


\author[mymainaddress,mysecondaryaddress]{Lin Li\corref{mycorrespondingauthor}}

\cortext[mycorrespondingauthor]{Corresponding author}
\ead{llin@ecust.edu.cn}

\address[mymainaddress]{Department of Finance, Business School, East China University of Science and Technology, Shanghai, 200237}
\address[mysecondaryaddress]{Risk-Center, ETH Z\"{u}rich, Switzerland, CH8092}


\begin{abstract}
Quantum theory offers a comprehensive framework for quantifying uncertainty, and many studies in quantum finance explore the stochastic nature of asset returns based on this theory, where the returns are likened to the motion of microscopic particles, adhering to physical laws characterized by quantum probabilities. However, such approaches inevitably presuppose that the changes in returns exhibit certain microscopic quantum effects, a presumption that may not guaranteed and has been criticized. In contrast to conventional approaches, this paper takes a novel perspective by asserting that quantum probability is primarily a mathematical scheme extending the classical probability from real to complex numbers, not exclusively tied to microscopic quantum phenomena. By directly linking the mathematical structure of quantum probability to traders' decisions and market trading behaviors, it circumvents the presupposition of quantum effects for returns and invocation the wave function. The phase in complex form of quantum probability serves as a additional element, capturing transitions between long and short decisions but also take information interaction among traders into account, This gives quantum probability an inherent advantage over classical probability in characterizing the multimodal distribution of asset returns. 
Through Fourier decomposition, we derive a Schrödinger-like trading equation, wherein each term corresponds explicitly to implications of market trading. The equation suggests discrete energy levels in financial trading, and returns follows the normal distribution at the lowest level. As the market shifts to higher trading levels, a phase transition occurs in the distribution of returns, leading to multimodality and fat tails. Empirical research on the Chinese stock market supports the existence of energy levels and multimodal distributions from this quantum probability asset returns model. 

\end{abstract}

\begin{keyword}
quantum finance\sep asset returns \sep Schr\"{o}dinger's equation\sep stock market \sep multimodal distribution
\end{keyword}

\end{frontmatter}

\linenumbers

\section{Introduction}

In the modern financial markets, especially for the secondary markets, the observable asset price dynamics  are determined by unobservable interactions among many investors. Due to the heterogeneity of beliefs \cite{giglio2021five,Ji_2022}, diversity of  strategies \cite{farmer2002market, Hauser_2011} and incompleteness of information \cite{berrada2006incomplete, chen2021incomplete} that are during the interactions, the fluctuations in  price dynamics are inevitable, which give rise to the  randomness for the asset returns.  

The statistical laws of asset returns are characterized based on the stochastic process models in traditional way. Such models are represented by stochastic equations coupled with explicit terms to reflect  specific randomness that is governed by certain probabilistic laws. Typical stochastic process models to model asset returns broadly include two major categories, a class represented by stochastic differential equations in continuous-time setting and the other class dominated by time-series model (i.e., stochastic difference equations) in discrete-time setting.  Some popular models among the former includes the GBM model that suggest the returns obeying a lognormal distribution \cite{osborne1959brownian, black1973pricing}, the CEV model that considers the skewness of the returns distriubtion \cite{cox1996constant}, the Cox-Ingersoll-Ross model that emphasizes the mean-reverting nature of returns \cite{cox1985intertemporal}, the hyperbolic diffusion model of stock returns that captures the characteristics of leptokurtosis and fat-tails \cite{rydberg1999generalized}, the Heston model that introduces stochastic volatility coefficients \cite{heston1993closed}, the jump-diffusion model assuming the existence of Poisson jumps \cite{kou2002jump}, a Levy-COGARCH model extending generalized autoregressive conditional heteroskedasticity to continuous time\cite{kluppelberg2004continuous}, a stochastic cusp catastrophe model considering the bimodal shape of returns density \cite{barunik2009can}, and a fractal Brownian motion model to encode the long-memory dependence of returns \cite{rogers1997arbitrage}. The most popular models for discrete markets contain a family of GARCH models \cite{bollerslev1986generalized, hentschel1995all}, which can be regarded as various extensions  of the time-stochastic volatility Heston model in discrete time, and a family of multi-factor models, which attribute statistical patterns of returns to the time-varying behavior of multiple risk factors added with idiosyncratic noise, such as the most traditional CAPM models \cite{sharpe1964capital},  Fama-French three-factor model \cite{fama1993common}, Fama-French five-factor model \cite{fama2015five}, the Stambaugh-Yuan four-factor model \cite{stambaugh2017mispricing}, and the lately proposed DHS three-factor model that incorporates behavioral finance factors \cite{daniel2020short}.  In general, the stochastic process-based models emphasize that the randomness of returns mainly originates from the irreducible trading noise in the market.

In addition to the stochastic processes  modeling  based on classical probability theory, the quantum theory, armed with quantum probability formulation, offers another way to deal with randomness and unravel statistical laws. In terms of the historical context in which quantum theory was proposed, it seems to imply that quantum theory is a specific theory for the stochastic behavior of microscopic particles and is not applicable to characterize the statistical behavior of systems composed of macroscopic entities, such as financial markets. But leaving aside those microscopic properties of particles that have clear physical connotations \footnote{e.g., wave-particle dualism, uncertainty effects, quantum entanglement, quantum non-commutation, etc.}, quantum theory actually provides a complete mathematical scheme for solving the statistical laws of drift and fluctuation of observables by means of quantum probability. In this scheme, quantum probability is the mathematical concept which is the self-consistent extension of probability measure  in the form of real number to the measure in the form of complex number (e.q.  in quantum mechanics, quantum probability corresponds to the physical connotation of a complex wave function.) The quantum probability can be further decomposed into two components: modulus and phase, where the modulus (after squaring) directly corresponds to the probability size, while the introduction of the phase can help to model the potential periodicity for the change of probability and can conveniently capture the local multimodal nature of the distribution. In financial markets, continuous trading behavior with price fluctuations causes assets to form a distribution of holding costs at different price levels, often with multi-modal characteristics. The multi-modal holding cost implies the potential multi-modal distribution of the return \footnote{It should be noted that the empirical results mostly indicate that the asset return presents unimodal heavy-tailed distribution. In fact, the return distribution is often a mixture of a unimodal distribution and some multi-modal distribution in different market states. Hence,  the multi-modal distribution is easily ignored in the aggregate analysis due to the less time for the presence of  multi-modal distribution.} . From such a perspective, it is indeed reasonable and feasible to exploit quantum probability to model the behavior of asset returns.

In the past two decades or so, several asset pricing models based on the approach of quantum theory have been proposed in the literature, collectively known as models of quantum finance. In terms of modeling approach, scholars are mainly divided into two representative schools of thought: one school of thought is based on Feynman's path integral idea, which analogizes assets in financial market to particles in a quantum field and assumes that their evolution equations obey some kind of velocity equations for quantum field. The correlation of price change between different time points can be naturally inscribed by wave propagators at different points in the field. The financial model of path integral was first proposed by Baaquie \cite{baaquie2007quantum} and refined in subsequent studies for the quantum field theory model \cite{baaquie2013financial,baaquie2018quantum}. Another notable work is  path-independent quantum finance model for sensitivity analysis of path integral  \cite{kim2011sensitivity}. Because quantum probabilities in complex form have an additional phase property than classical probabilities, they bear more inherent advantages over traditional stochastic process models in describing the volatility and correlation of financial assets. However, because path integral necessarily requires the path endpoints have explicit payoffs, such methods can only model financial assets with contingent claims such as interest rate term structures, forward rates and their derivatives, and are less suitable for modeling asset returns like stock. In addition, the great computational burden requires by path integrals restricts the use of such methods in actual financial practice \cite{lee2020quantum}. Another school of thought is to start from the most fundamental Schrödinger equation in quantum theory and try to express the statistical laws of the evolution of financial assets as solutions of the wave function of the Schrödinger's equation. In the literature, motions and dynamics of price are analogous to quantum particle and randomness of price return is directly analogous to uncertainty displacement of a quantum particle from its equilibrium state, thus describing the statistical law by the anharmonic oscillation function.
 This school of research is more active, and most of them focus on the study of stock markets and the derivatives. For example, using the quantum wave functions for stock market analysis \cite{ataullah2009wave}
 ; assuming that the asset-volume-price relationship satisfies the Hamilton-Jacobi energy equation to study quantum probability \cite{shi2006does};  modifying the potential energy term of Schr\"{o}dinger's equation by distinguishing different traders to solve quantum probability of stock prices with anharmonic oscillator \cite{zhang2010quantum,gao2017quantum};  exploring the density of the return distribution by analogizing stock price limit up-limit down to quantum potential wells in  Schrödinger's equation \cite{meng2015quantum}; exploiting the theory of Bohmian potential approach, and pilot waves, with Schr\"{o}dinger's equation to incorporate behavioral financial factors into asset pricing \cite{choustova2007quantum,nasiri2018impact} and also an nonclassical oscillator model to characterize persistent fluctuations in stock market \cite{ye2008non}.

Although the  various  models in quantum finance are brilliant and with much highlight, however, most of the studies are modeled and solved by directly applying the relevant mathematical equations of quantum theory after simply making an analogy between the motion of microscopic particles and the motion of financial asset returns, without profoundly explaining the financial implications of these equations, rendering the meaning of many variables involved in the equations and solutions not well understood. For example, if the Schrödinger equation is used to study asset returns, what exactly does the Planck constant, which represents the smallest unit of energy in the equation, represent? What should the Hamiltonian and Lagrangian action quantities in it imply in financial terms? What is the theoretical implication of the practical correspondence of the phase term inside quantum probability? In short, the theoretical basis for applying the quantum probabilities expressed by the Schr\"{o}dinger equation to the study of financial variables has not been fully discussed. This has a lot to do with the rationale and necessity of applying quantum probability to financial modeling. This paper does not mean to contrast  applicable scenarios of  the quantum probability with classical probability , rather this paper will emphasize more on the relationship with the two. In fact, classical probability can be considered as a special case of quantum probability . By assigning some necessary interpretations, quantum probability will show particular advantages in capturing the statistical law of asset returns, as we will see shortly.


The road map of this paper is as follows. Section 2 reviews the basic concepts of quantum probability, emphasizing the connections and distinctions between quantum probability and classical probability. Section 3 explores how to assign financial interpretations to the mathematical structure of quantum probability based on traders' decisions and financial market trading behavior, especially to clarify the rationale for leveraging quantum probability through a discussion of the characteristics of active trading intentions. Section 4 first introduces the concept of $\omega$-market and employs this conceptual tool along with Fourier transforms to derive the quantum probability differential equation satisfied by return, termed the Schr\"{o}dinger-like trading equation. It is worth noting that this equation, although formally resembling the Schr\"{o}dinger equation in quantum theory, is derived entirely from \emph{non-microscopic} trading behavior. The involved variables, including kinetic energy or potential energy terms and energy levels, are echoed by explicit financial interpretations. Section discusses the model's solution and conducts empirical analysis using data from the Chinese stock market. The last section concludes.

\section{The basic conception of quantum probability}
Let us first discuss the connection between classical probability (CP) and quantum probability (QP). Recall that the CP theory is based on the notion of \emph{probability spaces}, which can be represented by a triple $(\Omega, \mathscr{F}, \mathbb{P})$.  $\Omega$ is the sampling space formed by a set of all possible single outcomes (elementary events). $\mathscr{F}$ is a $\sigma-$algebra on $\Omega$, that is a set including a family of subsets of $\Omega$. A certain subset of  $\Omega$ as an element in $\mathscr{F}$ is called a \emph{event} and only a event can be assign a number to indicate the size of the possibility of its occurrence, i.e. probability. $ \mathbb{P}$ is just the probability measure, a function defined on $\mathscr{F}$ to assign the probability on every events. 

The QP is also formed by a triple $(H, \mathscr{E}, \rho)$ \cite{focardi2020quantum,Yukalov_2016}, where $H$ stands for a Hilbert space  for which the inner product has been specified priori, and  the system with uncertainty is represented by a vector $h$ in $H$. $\mathscr{E}$ is the set of all permissible projection operators ${E}$  within $H$ for \emph{the vector} $h$. For a given orthogonal basis of $H$, orthogonal projections of $h$ represent \emph{events}. In particular, the projection to the orthogonal basis vector corresponds to  those of elementary events.  $\rho$ is called density operator. For a given $h$, $\rho$ is determined by $h$, and the quantity $\mathrm{tr}(\rho E)$ gives the probability of event ${E}$ in system $h$.

QP may seem to have a completely different mathematical structure from CP, and it is more abstract and obscure, not as intuitive as the definition of CP. Nonetheless, the probabilities defined in these two ways with different triples have a great similarity.  First,  just as $\Omega$ defines all the elementary events, $H$, together with a given orthogonal basis, also identifies all the elementary events. Second, for a given $h$ in $H$, the $\mathscr{E}$ of QP is mathematically isomorphic to the $\mathscr{F}$ of CP in terms of the identifying  events. The mutually exclusive events in $\mathscr{F}$, i.e., $A_1, A_2, A_1\cap A_2=\phi$  correspond to the two mutually orthogonal projection operators in $\mathscr{E}$, i.e., $E_1, E_2, E_1\bot E_2$. Third, for a given event, the size of probability on it is determined by the nature of system rather than the structure of $\mathscr{F}$ or $\mathscr{E}$. In CP, different system refers to different probability measure $\mathbb{P}$ that determines the probability. While in QP, different system corresponds to different the density operator that determines the probability.   The last but most important, the probability measure $\mathbb{P}(E)$ in CP can be reformulated in the form of $\mathrm{tr}(\rho E)$. The above similarities can be seen more clearly  with the Dirac's bra-ket notation system, as follows.

without loss of generality, consider the Hilbert space $H$ is separable and complex. The given orthogonal basis is $\{\cd{e_i}\}_{i=1}^{\infty}$, where $\forall i\neq j, $, the inner product of $\cd{e_i}$ and $\cd{e_j}$ is zero, i.e. $\np{e_i}{e_j}=0$ and $\mathrm{Span}(\cd{e_1},\cd{e_2}\cdots,\cd{e_{\infty}})=H$.  Suppose the uncertainty system $h \in H$ that we concern is  the vector $\cd{\psi}$ in $H$.  An orthogonal projection operator which project $\cd{\psi}$ into the subspace of $\text{Span}(\cd{e_i})$ can be represented by $\x{e_i}$, since, 
\begin{equation*}
	\biggl(\x{e_i}\biggr)~\cd{\psi}=\np{e_i}{\psi}\cdot \cd{e_i}~.
\end{equation*}

There exists a mutual translation from the language of CP to the language between QP. Obviously, all  operators $\x{e_i}$ constitute the set of elementary events. Hence, $\x{e_i} \in \mathscr{E}$ can be translated into $\{\omega_i\}\in \mathscr{F}$ in the language of CP, where the sampling space is $\Omega=\{\omega_i\}_{i=1}^{\infty}$.  For arbitrary non-elementary event $A\in  \mathscr{F}$, e.g. $A=\{\omega_i \mid i=1,2,3\}$, it can be translated into a orthogonal projection operator $E \in \mathscr{E}$, which is written by
$
	E=\sum_{i=1}^3 \x{e_i} ~.
$

If two events in different probability can be translated to each other, we denote as $A \overset{t}{\sim} E$. Now we show that any two mutually exclusive events $A_1, A_2 \in \mathscr{F}$  are equivalent to two orthogonal events $E_1, E_2 \in \mathscr{E}$, if $A_k \overset{t}{\sim} E_k, k=1,2$. Suppose $A_1=\{\omega_i \mid i\in \mathbb{S}_1\}$ and $A_2=\{\omega_j \mid j\in \mathbb{S}_2\}$, where $\mathbb{S}$ denote a set of numbers. Since $A_1, A_2$ are mutual exclusive, $A_1 \cap A_2=\phi$. And they are respectively translated to 
\[
E_1=\sum_{i\in \mathbb{S}_1} \x{e_i},\quad E_2=\sum_{j\in \mathbb{S}_2} \x{e_j}, ~~\qquad  \mathbb{S}_1 \cap  \mathbb{S}_2=\phi ~.
\]
Then the inner product of the two projections $\cd{p_1}=E_1\cd{\psi}$ and $\cd{p_2}=E_2\cd{\psi}$ equals to 
\begin{align*}
	\np{p_1}{p_2} &= \left(\sum_{i\in \mathbb{S}_1} \np{e_i}{\psi}^*\p{e_i}\right) \sum_{j\in \mathbb{S}_2} \cd{e_j}\np{e_j}{\psi}\\
	&=\sum_{i\in \mathbb{S}_1} \sum_{j\in \mathbb{S}_2}  \np{e_i}{\psi}^*\np{e_j}{\psi} \np{e_i}{e_j}=0~.
\end{align*}
So $E_1$ and $E_2$ are mutually orthognal. The last equation holds because  $\mathbb{S}_1 \cap  \mathbb{S}_2=\phi $. 

Next, suppose the system is assigned the probability measure $\mathbb{P}$ in CP. For a even $A=\{\omega_i\mid i \in \mathbb{S}\}\in \mathscr{F}$, its probability is quantized as
\begin{equation}\label{qwer}
	\mathbb{P}(A)=\sum_{ i \in \mathbb{S}} \mathbb{P}(\omega_i)~. 
\end{equation}
While, in QP, the system $\cd{\psi}$ and  the density operator $\rho_{\psi}=\x{\psi}$ are bounded together. Thus, for every elementary event $e_i = \x{e_i}$, the probability is quantized as 
\begin{align}
\mathrm{P}(e_i)=\mathrm{tr}(\rho_{\psi} e_i)&=\sum_{j=1}^{\infty} \p{e_j}\biggl(\x{\psi}\biggr)\biggl(\x{e_i}\biggr)\cd{e_j}\notag\\
	&=\sum_{j=1}^{\infty} \np{e_j}{\psi} \np{\psi}{e_i}\np{e_i}{e_j}= \np{e_i}{\psi}\np{\psi}{e_i}\label{wert}~.
\end{align}
The last equation holds because $\forall i\neq j, \np{e_i}{e_j}=0$. As $A \overset{t}{\sim} E$, 
\begin{align}
	\mathrm{P}(E)=\mathrm{tr}(\rho_{\psi}E)&=\sum_{j=1}^{\infty} \p{e_j}\biggl(\x{\psi}\biggr)\biggl(\sum_{i \in \mathbb{S}} \x{e_i}\biggr)\cd{e_j}\notag\\
	&=\sum_{j=1}^{\infty}\sum_{i \in \mathbb{S}} \np{e_j}{\psi} \np{\psi}{e_i}\np{e_i}{e_j}=\sum_{i \in \mathbb{S}} \np{e_i}{\psi}\np{\psi}{e_i}\label{erty}
\end{align}

Compare the forms of eq. (\ref{qwer}) and (\ref{erty}) and note eq.(\ref{wert}), one can see that  $\mathbb{P}(A)$ can be reformulated $\mathrm{tr}(\rho_{\psi}E)$, given $A \overset{t}{\sim} E$. It is worth noting that the real-valued function $\mathbb{P}: \Omega\mapsto [0,1]$ in CP will be further decomposed into two conjugate complex functions $\psi(\cdot)=\np{\cdot}{\psi}$ and $\psi^*(\cdot)=\np{\psi}{\cdot}$, i.e., probability amplitude function . According to eq.(\ref{wert}), the probability is $P(\cdot)=\phi(\cdot)\phi^*(\cdot)$. In the case of $\phi$ being real-valued, $\psi=\psi^*$ and $P(\cdot)=\psi^2(\cdot)$, QP is subtly reduced to the CP. However, when $\psi$ is a true complex function, $\psi$ can provide an additional information because $\psi$ contain the measurement of phase other than modulus. From this point of view, QP is a sort of extension of CP. It \emph{should not be} restricted to applications to microscopic quantum phenomena only. It is this extension that gives advantage of using QP to capture the statistical patterns of asset returns. Another noteworthy point is that the function $\psi$ is precisely the coefficient of $\cd{\psi}$ on each orthogonal basis vector when projected within the Hilbert space $H$, i.e., 
\begin{equation}\label{qwert}
	\cd{\psi}=\left(\sum_{i=1}^{\infty}\x{e_i}\right)\cd{\psi}=\sum_{i=1}^{\infty}\np{e_i}{\psi}\cdot \cd{e_i}~.
\end{equation}

%
%
%

\section{The quantum probability framework for statistical law of asset returns}
When modeling using the classical probabilistic framework, asset returns are traditionally treated as a random $\widetilde{r}$ variable, and the statistical law of returns are characterzied by a distribution density function (or mass function) $f_{\widetilde{r}}(r)$ that reveals the relative frequency (Frequencist's view ) or likelihood (Bayesian's view) of taking a particular value $r$ for the return. 


However, $f_{\widetilde{r}}(r)$ does not record the market's active trading intentions (ATI) at corresponding return levels. For example, $f_{\widetilde{r}}(r=2\%)=0.1$ only indicates that at a specific moment, the probability of the return reaching the $2\%$ is $0.1$. If we denote the closing price as $P^*=P_0(1+2\%)$, where $P_0$ is the initial price, then $f_{\widetilde{r}}(r=2\%)=0.1$ also signifies the probability of the closing price increasing from $P_0$ to $P^*$ is $0.1$. It's important to note that $P^*$ is determined by the specific active buy or sell orders matched at that moment. However, $f_{\widetilde{r}}(r)$ remains silent on the property of the ATI shaping the price $P^*$.

Imagine that at the $P^*$ level, the trading volume is 100. In this scenario, three possibilities exist:
\begin{itemize}
	\item Market's active trading intention is ``short": There are more than 100 buy orders at $P^*$, and a trader actively sells 100 shares.
	\item Market's active trading intention is ``long": There are more than 100 sell orders at $P^*$, and a trader actively buys 100 shares.
	\item Market's active trading intention includes both ``long" and ``short": Traders buy 100 units at the market price of $P^*$, while another group of traders simultaneously sells 100 units at the market price, achieving exactly 100 shares traded.
\end{itemize}

People cannot distinguish which scenario has occurred in terms of $f_{\widetilde{r}}(r=2\%)=0.1$. However, the property and intensity of ATI are crucial for modeling returns. In reality, the different property and intensity of ATI impact the probability of achieving specific returns and the dynamic changes in prices. When liquidity is sufficient, having a sufficient number of passive orders to support the price level, higher-intensity of ATI  (regardless of long or short) will increase the number of trades at that level, thus raising the size of probability of the corresponding return. When liquidity is insufficient, with an inadequate number of passive orders , the property of active trading will determine whether prices are pushed higher, increasing returns, or prices are pulled lower, decreasing returns. As discussed in subsection 2.1, the function $\psi(\cdot)$ in quantum probability framework are allowed to carry addition information. In fact, the concealed property and intensity of ATI can be precisely encoded uniformly within the quantum probability framework, as we will see shortly.

 In the quantum probability framework, we employ $\cd{\psi}$ to denote the market state and the asset return at time-point $t $ taken value of $r$ is considered as an elementary event $\x{r,t}$ . According to eq. (\ref{qwert}), the market can be decomposed as
 \begin{equation*}
 	\cd{\psi}=\left(\sum_{r,t}^{\infty}\x{r,t}\right)\cd{\psi}=\sum_{i=1}^{\infty}\np{r,t}{\psi}\cdot \cd{r,t}~.
 \end{equation*}
 The complex-valued probability amplitude function,  $\Psi(r, t)=\np{r,t}{\psi}$,  can be further specified as $\phi(r,t)\, e^{i\, \theta(r,t)}$. The distribution density function $f(r,t)=\Psi(r,t)\Psi^*(r,t)=|\Psi(r,t)|^2=\phi(r,t)^2$.  This paper only considers  the case of market where the system  are \emph{i.i.d} with respect to time. Hence, $\Psi(r,t)$ can be simplified as $\Psi(r)=\phi(r)\, e^{i\, \theta(r)}$, independent of $t$. And $f(r,t)$ reduces to $f(r)=|\Psi(r)|^2=\Psi(r)\Psi^*(r)$.  $\Psi(r)$ can be represented  by a spiral curve on the three-dimensional columnar coordinate space $r-\phi(r)-\theta(r)$, where the vector pointing from the origin to the curve is like a clock hand that scales.  As
$r$ changes, the length of the clock hand will undergo a corresponding alteration, accompanied by a certain degree of rotation. The former and the latter are characterized by $\phi(r)$  and 
$\theta(r)$, respectively. Figure 1 gives an illustration.

 \begin{figure}[t]
 	\centering
 	\includegraphics[width=12cm]{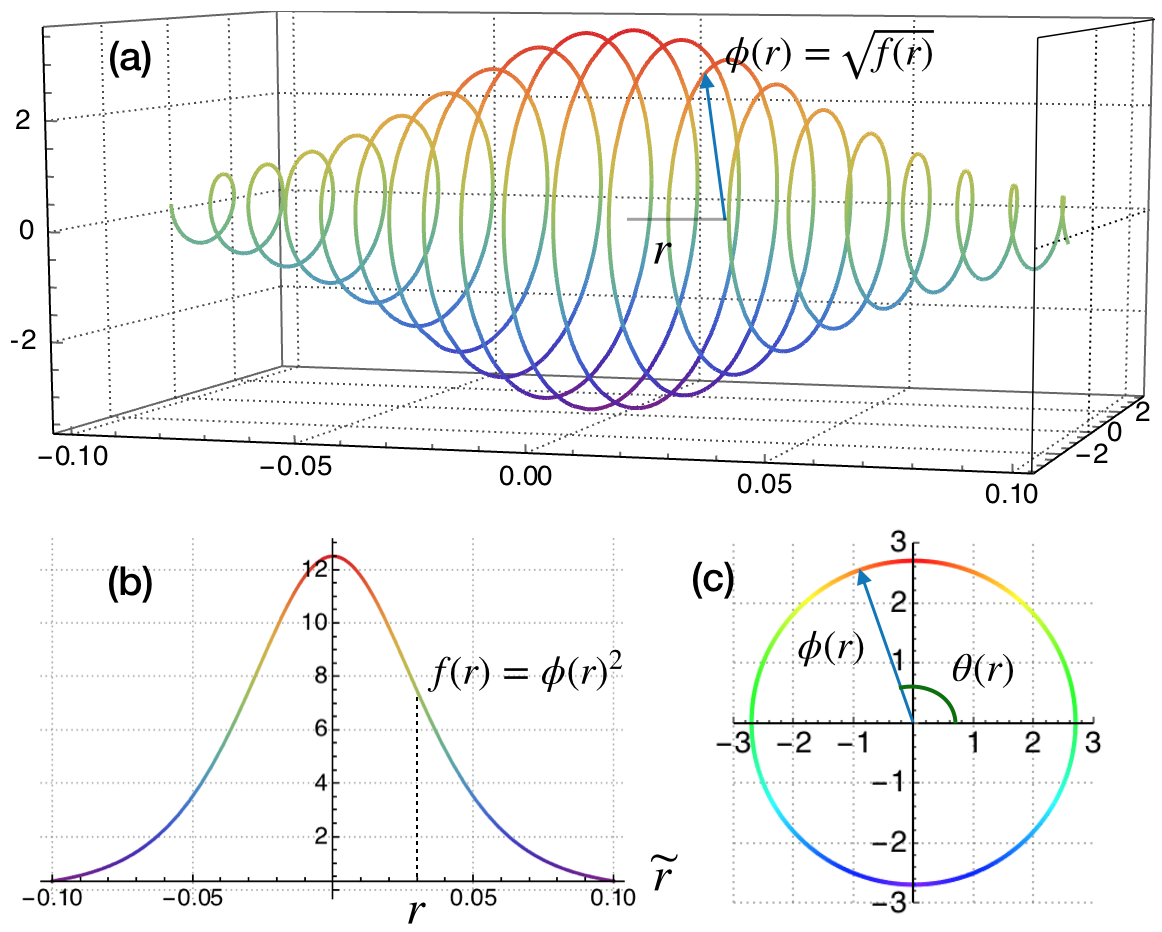}
 	\caption{Subfigure (a) gives the $\Psi(r)$ curve in the three-dimensional space characterizing the  information of return. Subfigure (b) shows  the density function of return that comes from $\Psi(r)$ based on side view plot of subfigure (a), where the density function is equal to the square of modulus of  $\Psi(r)$. Subfigure (c) shows the another side view plot of subfigure (a), which is a hypothetical complex plane, for possible values $r$, with the vector corresponding to the complex number $\Psi$ with length $\phi(r)$, while being rotated by a certain angle $\theta(r)$ .}
 \end{figure}
 
   
 We propose to use $\phi(r)$ to represent the ATI intensity at a specific $r$. A larger ATI intensity implies a higher probability of trading at that $r$, contributing to a higher probability density on levels of $r$. The function $f_{\widetilde{r}(r)}=\phi(r)^2$ satisfies this monotonically increasing relationship.  Meanwhile, we propose to employ $\theta(r)$ to characterize the property of ATI. It comes from two motivations.

\begin{figure}[t]
\centering
	\includegraphics[width=11.5cm]{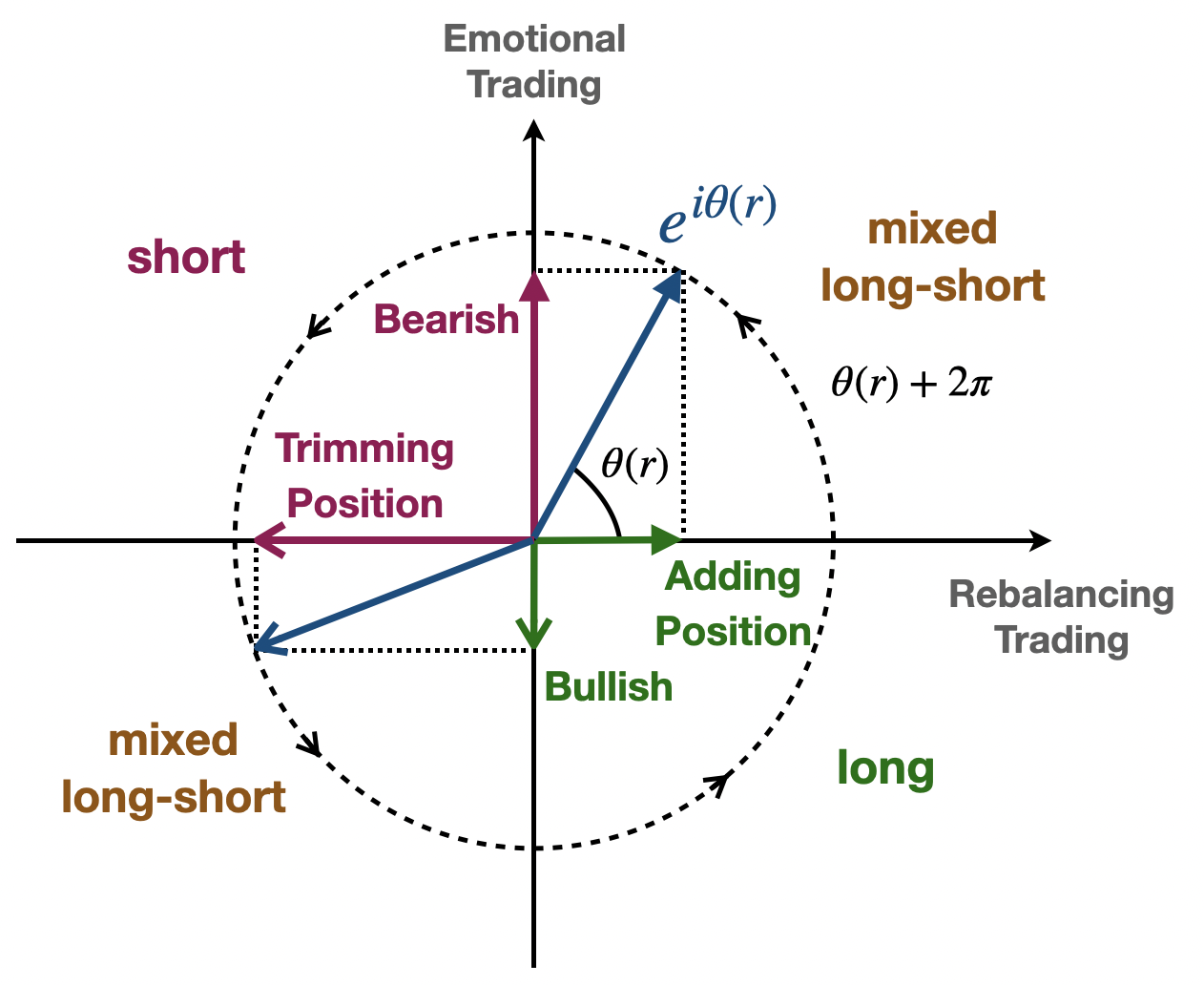}
	\caption{The ATI Plane. In the ATI plane, complex numbers with a magnitude of 1 are employed to characterize the properties of ATI. The figure illustrates the properties of ATI corresponding to two different market states, denoted by distinct arrow styles.The real and imaginary parts of these complex numbers signify rational position rebalancing intention and emotional trading intention, respectively. Positive real part denote position adding demand, while negative real part indicate position trimming demand. On the imaginary axis, positive values represent bearish sentiment, and negative values represent bullish sentiment. In the second quadrants, the ATI is short, while in the fourth quadrants, it is long. In the first and third quadrants, the ATI property become mixed long-short. As the complex number completes a full rotation in the ATI, it undergoes a cycle oscillating from long (short) to mixed long-short and back to long (short). During counterclockwise rotation, market with a long(short) ATI will transition first into a pure bearish(bullish) state and then into a bullish(bearish) state; conversely, during clockwise rotation, the sequence is precisely reversed.}
\end{figure}

\emph{The first motivation}  arises from the fact that the investors in the market often oscillates between the  intention of  ``long" and ``short"  cyclically with unidirectional change in $r$. Let's illustrate this behavior with a representative investor $A$, who faces increase of $r$. The analysis  where the $r$ unidirectionally decreases is similar. Suppose at $P=P_0, r=r_0$, $A$'s ATI is ``long". In a typical cycle, $A$ will undergo the following four stages, as Figure 2 illustrates.
\begin{itemize}
\item[(1)] As the price $P$ increases, return $r$ also increases. But as $r$ increases, the position rebalancing intention of $A$, manifesting as ``position adding" demand for the stock  gradually decreases, while the ``bearish" sentiment  correspondingly increases. When the price $P$ increases to a certain value, for example, $P_1 > P_0$ (as $r\to r_1$), the ``bearish" sentiment will reach its maximum, the ``position adding"  shrinks to zero simultaneously. Beyond the point,  $A$'s ATI turns to be  ``short".
\item[(2)]  With further increase in $P$, $A$'s position rebalancing intention turns negative and increases negatively. It indicates an increase in $A$'s ``position trimming" intention. $A$'s ATI remains ``short". However, as the price continues to increase, $A$'s sentiment leads him to start doubting the price forecast, and the "bearish" sentiment gradually decreases. When $P$ continues to increase to a higher level, for example, $P_2 > P_1$ (as $r\to r_2$), the ``bearish" sentiment  eventually vanishes.
\item[(3)]  When $P$ increases more, $A$'s ``bullish" sentiment turns up. However, the ambiguity in price uptrend leads $A$ to rationally maintain the "position trimming" intention. As $P$ increases, the ``position trimming" intention becomes smaller and will be eventually reduced to zero when the price increases to a higher level, for example, $P_3 > P_2$ (as $r\to r_3$). $A$'s ATI turns back to be ``long".
\item[(4)]  When $P$ increases further, $A$ eventually forms a rational rebalancing intention based on momentum trading strategy. The "position adding" intention turns up and increases, and $A$'s ATI remains "long."  However, with the ongoing increase in price, $A$ becomes increasingly cautious, leading to a gradual contraction of the "bullish" sentiment.  When the price reaches a certain level, for instance, $P'_0 > P_3 > P_0$ (as $r\to r'_0$), $A$'s "bullish" sentiment completely diminishes. Beyond this point, the "bearish" sentiment will start to reemerge, and $A$ will enter a new cycle.
\end{itemize}


Using the complex number $e^{i\theta(r)}$ to quantify ATI allows for a straightforward characterization of the narrative above. In this context, the four stages conveniently correspond to the four quadrants on the complex plane where $e^{i\theta(r)}$ resides. The variation in $\theta(r)$ induces changes in the direction of the complex number, signifying distinct ATI properties. A complete cycle corresponds to a rotation of the vector in the complex plane by $2\pi$, where $\theta(r)$ increases by $2\pi$. The real axis represents the (rational) intention of position rebalancing , while the orthogonal imaginary axis represents intention of emotional trading. The positive direction on the real axis indicates "position adding" , and the positive direction on the imaginary axis indicates "bearish" sentiment.  Both "adding/trimming" demand and "bearish/bullish" sentiment are simultaneously represented in ATI, expressed as real and imaginary parts of  $e^{i\theta(r)}$ . This can be further understood as the ATI is a superposition of intentions for both rational rebalancing and emotional trading with some weights. 
Defining two special market states, where investors collectively exhibit unit-intensity position adding demand denoted by $\cd{\text{adding}}$, and unit-intensity bearish sentiment denoted by $\cd{\text{bearish}}$. They satisfy $\np{r}{\text{adding}}=1$ and $\np{r}{\text{bearish}}=i$. According to the superposition effect, the market state satisfies,
\[
\cd{\psi}=a(r)\cdot \cd{\text{adding}}+b(r)\cdot \cd{\text{bearish}}~.
\]
Since,
\[
\phi(r)e^{i\theta(r)}=\np{r}{\psi}=\p{r}\biggl[a(r) \cdot \cd{\text{adding}}+b(r)\cdot\cd{\text{bearish}}\biggr]=a(r)+i~ b(r)
\]
evidently,
\begin{equation}
	\begin{cases}
	a(r)=\phi(r)\cos \theta(r)\\
	b(r)=\phi(r)\sin\theta(r)
\end{cases}
\end{equation}

 We refer to the complex plane where $e^{i\theta(r)}$ resides as the ATI plane (Figure 1(c)). Figure 2 elaborates on the implications of each element in the ATI plane in detail.

\emph{The second motivation} arises from the capability of operations in complex number to set forth information interactions among active traders and their impact on ATIs. Let's exemplify this with a rather simple scenario. For pedagogical purposes, assume the market comprises only two types of active traders, denoted as Alice and Bob. Passive traders are supposed to exist in the market to provide liquidity. 


When the market only consists of  Alice,  the markets are referred to as Market 1; Correspondingly, the market only consists of Bob are referred to as Market 2. Obviously, Market 1's ATI is equivalent to Alice's ATI, and similarly, Market 2's ATI is equivalent to Bob's ATI. Figure 3 illustrates the ATIs of the two markets at a specific price level (return at $r$).  As it shows, Market 1's ATI is a superposition of "position adding" and "bearish" sentiment, while Market 2 only has a "position trimming" demand. Additionally, the components of the ATIs on the real axis in Market 1 and Market 2 oppositely. 
As discussed earlier, the probability density of return at $r$ in the market is determined by the number of trades, which is influenced by ATI. Therefore, without considering the information interaction between Alice and Bob, the number of trades induced by their presence in market simultaneously at $r$ should be determined by the sum of the intensities of their ATIs. At this point, the property of the new market's ATI (ATI direction) has no impact on the probability density.
\begin{figure}
	\centering
	\includegraphics[width=15cm]{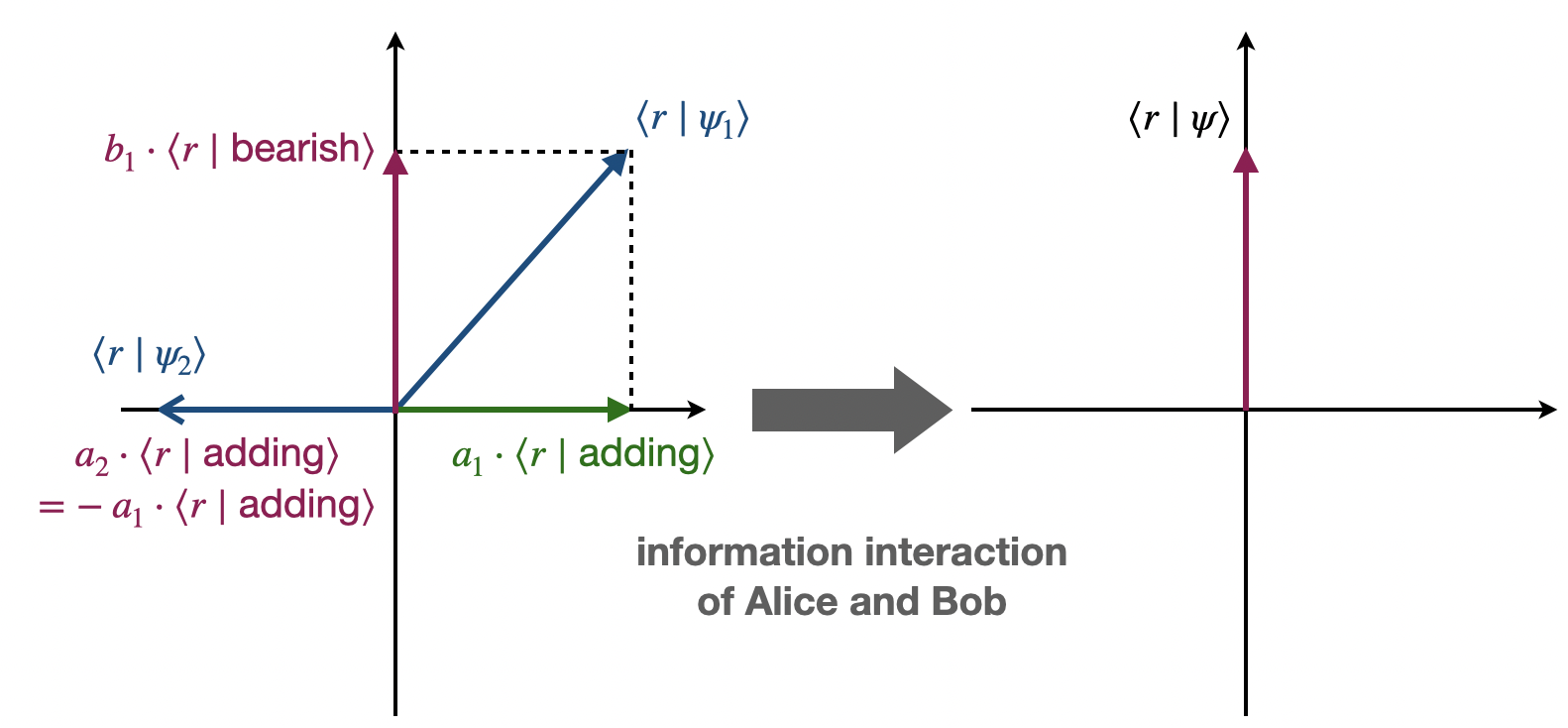}
	\caption{A simple illustration on the impact of information interaction on market ATI.The left panel of the ATI plane illustrates the ATI of the market when there is only one active trading investor. For Market 1 with only Alice, the ATI at $r$ is given by $\np{r}{\psi_1} = a_1 \cdot \cd{\text{adding}} + b_1 \cdot \cd{\text{bearish}}$. For Market 1 with only Bob, the ATI at $r$ is $\np{r}{\psi_1} = a_2 \cdot \cd{\text{adding}}$, where $a_2 = -a_1$, indicating that in terms of position rebalancing intention, Alice and Bob's ideas are precisely opposite with equal intensity.The consequence of interaction is depicted in the right panel, showing the market's ATI when both Alice and Bob are present. The interaction results in the simultaneous withdrawal of position rebalancing intentions by Alice and Bob. Contrary intentions do not lead to trades. Consequently, the market's ATI is left with the bearish emotional trading intention, which contributes transactions.}
\end{figure}


However, when there is information interaction between the two, the situation changes. Information interaction induces alterations in their respective ATIs. Knowing that Bob intends to trim his position, Alice, concerned that Bob possesses private information she lacks, reduces her "position adding" demand. Conversely, Bob, aware of Alice's "position adding" demand, reduces his "position trimming" demand, suspecting a potential information deficit.
Under information interaction, conflicts in their position rebalancing intentions neutralize, and the rational trading that each would have independently generated disappears. This effect aligns with the well-known notion of the Financial No-Trade Theorem (\cite{notrading}). The final outcome is that the new market's ATI, under information interaction retains only Alice's bearish emotional trading intention \footnote{It is important to note that, due to the independence of the components related to rational trading and emotional trading within ATI, there is no information interaction. Therefore, the position adding demand on the real axis and the bearish sentiment on the imaginary axis do not offset each other. For instance, Bob's rational position trimming demand does not intensify due to the bearish sentiment that Alice possesses.}. The Interaction not only alters the properties of the market ATI but also its intensity, thereby reshaping the distribution behavior of returns. Handling such effects within the real number is challenging, but they can be elegantly characterized using complex addition and multiplication based on the ATI plane. Assuming the ATI of Market 1 and Market 2 is $\Psi_1 = \phi_1(r) e^{i\theta_1(r)}$ and $\Psi_2 = \phi_2(r) e^{i\theta_2(r)}$ respectively. The interaction leads the new market's ATI to be $\Psi = \Psi_1 + \Psi_2$,  and thus, 
\begin{align}\label{yuiop}
	f(r)&=\Psi(r)\Psi^*(r)=[~\Psi_1(r)+\Psi_2(r)~][~\Psi^*_1(r)+\Psi^*_2(r)~]
	\notag\\
	&=|\phi_1(r)|^2+|\phi_2(r)|^2+\phi_2(r)\phi_1(r) e^{i~[\theta_2(r)-\theta_1(r)]}+\phi_1(r)\phi_2(r) e^{i~[\theta_1(r)-\theta_2(r)]}\notag\\
	&=|\phi_1(r)|^2+|\phi_2(r)|^2+2\phi_1(r)\phi_2(r)\cos[\,\theta_1(r)-\theta_2(r)\,]~.
\end{align} 


The third term given by \eqref{yuiop} represents the effect of information interaction. Recall that for Market 1 and 2, $f_1(r)\sim |\phi_1(r)|^2$ and $f_2(r)\sim |\phi_2(r)|^2$ (ignoring normalization constants). In the absence of information interaction, the third term is zero. Therefore, the $f(r)$ of the new market is the sum of the contributions from Alice and Bob (independently in their respective markets), resulting in a mixed distribution.When $\theta_1(r)-\theta_2(r)\neq 0$, interaction effects between Alice and Bob emerge. When $\theta_1(r)-\theta_2(r)\in (2k\pi-\dfrac{\pi}{2},~2k\pi+\dfrac{\pi}{2}), k\in \mathbb{Z}$, the term is positive, indicating an increase in probability density. Intuitively, if the two traders have similar ATI properties at a given $r$, their intentions will reinforce each other, prompting synchronized trading activities at $r$, thus increasing the chances of successfully achieving $r$. When $\theta_1(r)-\theta_2(r)\in (2k\pi+\dfrac{\pi}{2},~2k\pi+\dfrac{3\pi}{2}), k\in \mathbb{Z}$, the term is negative, decreasing the probability density given $r$, which characterize the Financial No-Trade effect  where intentions oppose each other. Specifically, the term reaches its maximum when $\theta_1(r)-\theta_2(r)=0$. By contrast, when $\theta_1(r)-\theta_2(r)=k\pi$, the term reaches its minimum.


The analysis above can be generalized to the case of multiple traders, revealing that, taking into account the effect of information interaction, the market's ATI is essentially an aggregation of each trader's ATI. This can be expressed as:

\begin{equation}\label{E:totATI}
	\Psi(r)=\phi(r)e^{i\theta(r)}=\sum^N_{i=1}\phi_i(r)e^{i\theta_i(r)}= \sum_{i=1}^{N}\Psi_i~,
\end{equation}
where $N$ represents the total number of  investors of active trading in the market, and $\Psi_i$ is the ATI of investor $i$. 
We refer to Equation \eqref{E:totATI} as the \emph{principle of decomposing the market's total ATI } among traders.

\section{The model for asset returns}
\subsection{The $\omega$- market}

Given the ATI of the market, its Fourier decomposition yields
\begin{equation}\label{E:fuliye}
	c(\omega)=\dfrac{1}{{2 \pi}}\int_{-\infty}^{+\infty} \Psi(r)\,e^{-i\,\omega r}\dd\, r
\end{equation}
We will immediately discern the implications of the coefficient $c(\omega),$ necessitating the introduction of the $\omega$-market concept. Performing the Fourier inverse transform on the above equation, we obtain
\begin{equation}\label{E:ifuliye}
	\Psi(r)=\int_{-\infty}^{+\infty} c(\omega)\,e^{i\,\omega r}\dd\, \omega
\end{equation}
In accordance with Equation \eqref{E:totATI}, the above expression elucidates the principle of decomposing the market's total ATI among traders. All active traders in the market are categorized into different groups based on distinct $\omega$ values. Traders within each group are collectively referred to as $\omega$-traders. Correspondingly, the market formed by these traders is termed the $\omega$-market.

From the rotational behavior of ATIs in the ATI plane, it is evident that $\omega$-trader's ATI with the same $\omega$ value synchronously change direction with variations in $r$. Simultaneously, each $\omega$-trader in the $\omega$-market has a target return rate and emotional cycle. This is because, on one hand, $\omega$ determines the trader's internally setting target return rate. As the return rate $r$ increases by $\frac{2\pi}{\omega}$, the ATI completes one revolution, undergoing a rebalancing cycle based on the target return rate $\bar{r}=\frac{2\pi}{\omega}$. On the other hand, $\omega$ also dictates the emotional cycle of relative return rate variations for traders. This is evident as the ATI undergoes one complete rotation of bullish/bearish in the ATI plane, corresponding to a cycle of $\frac{2\pi}{\omega}$.

%
%
Consider the ATI of the $i$-th $\omega$-trader in the $\omega$-market, expressed as $\phi_i(r) e^{i \omega r+\theta_i}$. Here, $\theta_i$ represents the trader's heterogeneous trading intention at the reference point $r=0$, attributed to their prior holdings or accumulated sentiment.
Following the trader decomposition principle, the collective ATI of $\omega$-traders is given by $\phi_i(r) e^{i \sum_i\theta_i}e^{i \omega r}$. Comparing this with Equation \eqref{E:ifuliye}, it is evident that the $c(\omega)=\phi_i(r) e^{i \sum_i\theta_i}$ where the modulus of $c(\omega)$ represents the ATI intensity of the $\omega$-market, and its phase conveys the collective ATI property of the $\omega$-market at $r=0$.

It is crucial to note that Equation \eqref{E:ifuliye} does not assert that each \emph{actual} active trader in the market must correspond to an \emph{actual} individual $\omega$-trader. A $\omega$-trader should be understood as an \emph{agent} in the market with trading intent. This could be a trading strategy of atrader, a certain set of trading rules, a specific trading seat or account, or even represent an actual trader themselves.

\subsection{The Shr\"{o}dinger-like trading equation}

Firstly, it is noteworthy that regardless of the direction of ATI vector, it represents one trading decision. Thus, as the vector rotates by $\Delta\theta$ in the ATI plane, the area swept by the vector will be statistically proportional to the number of trades. For an $\omega$-market, the average number of trades contributed by a change in  $\Delta r$ is proportional to $\omega \Delta r$. Let $f_r$ denote the average shares of active trading per decision for a change $\dd r$ initiated at $r$. Therefore, the contribution of $\omega$-market to the trading volume for a change $\dd r$ is given by $\dd Q_{\omega}=f_r\, \omega\, \dd r$, which is always positive\footnote{When $\omega$ <0, $\dd r$ <0, hence $\dd Q_{\omega}$ is also positive.}. Market depth varies at different yield rates, so $f_r$ is a function of $r$.

On the other hand, it is crucial to recognize that when return is at $r$, the $(\omega+\dd \omega)$-market will make more trading decisions than the $\omega$-market, proportional to $r \dd \omega$. The additional trading volume contributed is also proportional to $ r \dd \omega$. When $r$ remains unchanged, let the average shares per additional active trade be denoted as $g$. Therefore, at  $r$, the difference in trading volume contributed by two $\omega$-markets with a difference $\dd \omega$ is $\dd Q_{r}=g\, r\,\dd \omega$. Since there are no differences in the order placement mechanisms across different $\omega$-markets, $g$ is a constant, independent of $\omega$ and $r$. Assuming there exists a specific $\bar{\omega}$ and $\bar{r}$ such that $\dd Q_{\omega}=\dd Q_r$, i.e., $f_r \,\overline{\omega}\, \dd r=g\, \overline{r}\,\dd \omega$, we have,
\begin{equation}
    f_r= h \dfrac{\dd \omega}{\dd r},\quad \text{where},\,\, h:=\dfrac{g\overline{r}}{\overline{\omega}}.
\end{equation}
 
According to the above expression, the trading volume contributed by the $\omega$-market during the transition from $0$ to $r$ in asset return is given by
 \begin{equation}
 	Q_{\omega}=\int_{0}^t\dd Q_{\omega}=\int_{0}^t f_r\, \omega\, \dd r = \int_{0}^{\omega} h \dfrac{\dd \omega}{\dd r}\cdot \omega\, \dd r= \dfrac{h}{2}\omega^2
 \end{equation}
Furthermore, the following theorem holds.

\begin{thm}

Assuming the complex-valued probability density of asset returns at $r$ is denoted as  $\Psi(r)$,  the expected realized trading volume in the process of attaining this return is expressed as:

\begin{equation}\label{E:thm1}
\overline{Q}=\int_{-\infty}^{+\infty}\Psi^*(r)\left(-\dfrac{h}{2}\dfrac{\dd^2}{\dd r^2}\right)\Psi(r)\dd r
\end{equation}

Here, $\psi^*(r)$ represents the complex conjugate of $\Psi(r)$.

\end{thm}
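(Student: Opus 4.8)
The plan is to read \eqref{E:thm1} as the position-space form of an average taken over the $\omega$-markets, combining the Fourier pair \eqref{E:fuliye}--\eqref{E:ifuliye} with the per-market volume $Q_\omega=\frac{h}{2}\omega^2$ already derived above. The guiding idea is that the total expected volume is each $\omega$-market's contribution $Q_\omega$ averaged against the likelihood that the market occupies that $\omega$-mode; since $c(\omega)$ is exactly the amplitude of the $\omega$-component of the aggregate ATI, this likelihood should be governed by $|c(\omega)|^2$. I would therefore begin from the frequency-space representation
\[
\overline{Q}=\int_{-\infty}^{+\infty}Q_\omega\,w(\omega)\,\dd\omega=\frac{h}{2}\int_{-\infty}^{+\infty}\omega^2\,w(\omega)\,\dd\omega,
\]
with $w(\omega)$ a frequency-space weight to be fixed, and then show the right-hand side equals the $r$-space integral in \eqref{E:thm1}.

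The core computation is a Plancherel-type manipulation resting on the operator correspondence $\omega\leftrightarrow-i\frac{\dd}{\dd r}$ induced by \eqref{E:ifuliye}. First I would differentiate \eqref{E:ifuliye} twice under the integral, using $\frac{\dd^2}{\dd r^2}e^{i\omega r}=-\omega^2e^{i\omega r}$, to obtain
\[
-\frac{h}{2}\frac{\dd^2}{\dd r^2}\Psi(r)=\frac{h}{2}\int_{-\infty}^{+\infty}\omega^2\,c(\omega)\,e^{i\omega r}\,\dd\omega,
\]
so the differential operator simply reinstates the factor $\omega^2=\frac{2}{h}Q_\omega$ mode by mode. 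Inserting this together with the conjugate expansion $\Psi^*(r)=\int c^*(\omega')e^{-i\omega'r}\dd\omega'$ into \eqref{E:thm1}, I would carry out the $r$-integration first via $\int_{-\infty}^{+\infty}e^{i(\omega-\omega')r}\dd r=2\pi\,\delta(\omega-\omega')$, collapse one frequency integral, and reach
\[
\overline{Q}=\pi h\int_{-\infty}^{+\infty}\omega^2\,|c(\omega)|^2\,\dd\omega.
\]
Matching this against the weighted-average form identifies $w(\omega)=2\pi|c(\omega)|^2$, precisely the frequency-space density forced by the asymmetric convention in \eqref{E:fuliye}.

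The step I expect to be the main obstacle is not the differentiation but the normalization bookkeeping that makes $w(\omega)$ a legitimate weight consistent with $f(r)=|\Psi(r)|^2$. Because \eqref{E:fuliye} carries the entire $\frac{1}{2\pi}$ on the forward transform, the Plancherel relation reads $\int|\Psi(r)|^2\dd r=2\pi\int|c(\omega)|^2\dd\omega$, and this stray $2\pi$ must be tracked carefully through every $\delta$-collapse; mishandling it would misstate $\overline{Q}$ by a constant factor. I would therefore check the zeroth moment (total probability) first to verify the $2\pi$ factor before trusting the second-moment identity, and observe that $w(\omega)=2\pi|c(\omega)|^2$ is exactly what yields $\int w(\omega)\,\dd\omega=\int f(r)\,\dd r$, confirming that \eqref{E:thm1} is the faithful operator representation of the expected trading volume.
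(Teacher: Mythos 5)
Your proposal is correct, and at its core it uses the same two ingredients as the paper's own proof: the Fourier pair \eqref{E:fuliye}--\eqref{E:ifuliye} and the correspondence $\omega^2 e^{i\omega r} = -\frac{\dd^2}{\dd r^2}\, e^{i\omega r}$, which turns multiplication by $Q_\omega=\frac{h}{2}\omega^2$ in frequency space into the operator $-\frac{h}{2}\frac{\dd^2}{\dd r^2}$ in return space. The organization differs in two ways. First, direction and mechanics: the paper starts from $\overline{Q}=\int Q_\omega\, c^*(\omega)c(\omega)\,\dd\omega$, substitutes the transform expression for $c^*(\omega)$, and interchanges the order of integration so the derivative can be pulled outside the $\omega$-integral to reassemble $\Psi(r)$; you instead expand both factors on the $r$-side and collapse the double frequency integral with $\int e^{i(\omega-\omega')r}\dd r = 2\pi\delta(\omega-\omega')$. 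These manipulations are interchangeable. Second, and more substantively, the normalization: the paper declares the $\omega$-market weight to be $|c(\omega)|^2$ but then substitutes $c^*(\omega)=\bigl(\int\Psi(r)e^{-i\omega r}\dd r\bigr)^*$, silently dropping the $\frac{1}{2\pi}$ that its own definition \eqref{E:fuliye} carries; read literally against \eqref{E:fuliye}, the paper's chain of equalities is off by a factor of $2\pi$. Your identification $w(\omega)=2\pi|c(\omega)|^2$, validated by the Plancherel relation $\int|\Psi(r)|^2\dd r = 2\pi\int|c(\omega)|^2\dd\omega$ so that $\int w(\omega)\,\dd\omega=\int f(r)\,\dd r=1$, is exactly the repair that makes the frequency-space weight a genuine probability density under the paper's asymmetric convention and makes \eqref{E:thm1} hold as stated. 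So your route buys internal consistency that the paper's own write-up lacks, at the modest cost of having to argue that the economically meaningful weight is the normalized density $2\pi|c(\omega)|^2$ rather than the bare $|c(\omega)|^2$ the paper posits.
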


\begin{proof}
According to the principle of decomposing the market's total ATI, the probability for the contribution of a $\omega$-market to the total trading volume is given by $|c(\omega)|^2 = c^*(\omega)c(\omega)$. Therefore,

\begin{align*}
	\overline{Q}&=\int_{-\infty}^{+\infty} Q_{\omega} c^*(\omega)c(\omega)\dd \omega =\int_{-\infty}^{+\infty} c^*(\omega)\dfrac{h}{2}\omega^2  c(\omega)\dd \omega \\
& = \int_{-\infty}^{+\infty}\left(\int_{-\infty}^{+\infty} \Psi(r)\,e^{-i\,\omega r}\dd\, r\right)^* \dfrac{h }{2}\omega^2 c(\omega)\dd \omega\\
&= \int_{-\infty}^{+\infty} \Psi^*(r)\left(\int_{-\infty}^{+\infty} \dfrac{h }{2}\omega^2\,e^{i\,\omega r} c(\omega)\dd \omega\right)\dd \,r\\
&=\int_{-\infty}^{+\infty} \Psi^*(r)\left(\int_{-\infty}^{+\infty} \dfrac{h }{2}\left[\left(-i\dfrac{\dd }{\dd r}\right)\left(-i\dfrac{\dd }{\dd r}\right)\,e^{i\,\omega r}\right] c(\omega)\dd \omega\right)\dd \,r\\
&= \int_{-\infty}^{+\infty} \Psi^*(r)  \left(-\dfrac{h }{2}\right) \left(\dfrac{\dd^2}{\dd r^2}\right) \left[\int_{-\infty}^{+\infty} e^{i\,\omega r} c(\omega)\dd\omega\right]\dd\,r\\
&=\int_{-\infty}^{+\infty}\Psi^*(r)\left(-\dfrac{h}{2}\dfrac{\dd^2}{\dd r^2}\right)\Psi(r)\dd r	\qquad\qquad\qquad\qquad\qquad\qquad\qquad\qquad \Box
\end{align*}
 
\end{proof}

From the structural perspective of Equation (\ref{E:thm1}), due to the fact that $\Psi^*(r)\Psi(r)$ precisely represents the probability density of achieving a return rate of $r$, the differential operator $-\dfrac{h}{2}\dfrac{\dd^2}{\dd r^2}$ can be paralleled to the  realized volume operator that increase/decrease the return from $0$ to $r$. 

%

In real markets, realized trading volume gradually emerges from latent trading volume, which, in turn, is determined by the potential supply-demand gap (SDG). The potential SDG does not immediately dissipate after a single transaction; instead, it gradually diminishes with future inflows or outflows. Let $Z$ represent the potential SDG,  that is the absolute value of difference between potential demand $Z^{S}$ and potential supply $Z^{D}$. 
%
%

%

After all, the potential SDG is reflected by unexecuted orders placed by passive traders. By categorizing all passive traders into three major classes, a further breakdown of the potential SDG can be achieved.

\begin{itemize}
\item[(1)] Rational speculators, denoted as $a$. This category of investors engages in trend trading where the continuous variation of returns amplifies the potential SDG, while the reversal of returns diminishes the potential SDG.
Mathematically, ${\dd Z_a} = k_a(r)\, r\, \dd r$, where $k_a(r)  > 0$. Additionally, risk aversion for asset hedging weakens the dependence of the supply-demand gap on the magnitude of returns, expressed as $k_a(r) = a - \lambda_a r^2$.
\item[(2)] Irrational speculators, denoted as $b$. Similar to rational speculators, this category engages in trend trading with ${\dd Z_b}= k_b\, r\, \dd r $, where $k_b > 0$. However, unlike rational speculators, this group lacks risk control capabilities for asset volatility, and $k_b$ is a constant generating positive feedback, i.e., $k_b = b$.
\item[(3)] Noise liquidity providers, denoted as $c$, play a crucial role in contributing to market liquidity. As speculators amplify potential SDG, these providers mitigate the SDG by placing counteractive orders in book. Conversely, they adjust accordingly when the speculators reduce the potential SDG. Hence, theoretically ${\dd Z_c} = -\gamma (\dd Z_a + \dd Z_b) = -\gamma(a+b-\lambda_a r^2) r \dd r$, where $\gamma > 0$.  However, in real markets, noise liquidity providers cannot accurately know the potential SDG contributed by speculators; they can only utilize perceived estimates, denoted as $(c-\lambda_c r^2) r\dd r$. Therefore, $\dd Z_c =-\gamma (c-\lambda_c r^2) r\dd r$.
\end{itemize}

%

In conclusion, $\dd Z=\dd Z_a+\dd Z_b+\dd Z_c=[\,(a+b-c\gamma)r+(\gamma \lambda_c-\lambda_a)r^3\,]\,\dd r=(\alpha r+\delta r^3)\dd r$, where $\alpha:=a+b-c\gamma$ and $\delta:=\gamma \lambda_c-\lambda_a$. The increase in potential SDG as the return changes from $0$ to $r$ is given by,
\begin{equation}
		V(r)=\int_{0}^{r} \dd Z=\int_{0}^{r}(\alpha r+\delta r^3)\dd r =\dfrac{\alpha}{2} r^2+\dfrac{\delta}{4} r^4
\end{equation}
Since potential SDG corresponds to unrealized latent volume, we have the following proposition.

%

\begin{thm}
	Assuming the complex-valued probability density of asset returns at $r$ is denoted as $\Psi(r)$, the expected unrealized trading volume that the market anticipates producing in future,  when the return reaches $r$ is given by
	\begin{equation}\label{E:thm2}
	\overline{V}=\int_{-\infty}^{+\infty}\Psi^*(r)\left(\dfrac{\alpha}{2} r^2+\dfrac{\delta}{4} r^4
\right)\Psi(r)\dd r
	\end{equation}
\end{thm}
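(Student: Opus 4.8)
The plan is to follow the same expectation-taking logic used to establish the previous proposition, while exploiting a key structural simplification. There the realized volume contributed by an $\omega$-market, $Q_\omega=\frac{h}{2}\omega^2$, was a function of the trading frequency $\omega$, so computing $\overline{Q}$ required averaging against $|c(\omega)|^2$ in the $\omega$-representation and then passing to the $r$-representation via the Fourier pair \eqref{E:fuliye}--\eqref{E:ifuliye}, which converted $\omega^2$ into $-\dd^2/\dd r^2$. By contrast, the accumulated potential SDG $V(r)=\frac{\alpha}{2}r^2+\frac{\delta}{4}r^4$ derived just above is already a function of the return level $r$ itself, so no transform is needed.

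First I would pin down the quantity being averaged: by its construction $V(r)$ is the latent (unrealized) trading volume associated with the return reaching the level $r$. From the quantum-probability framework of Section 2, the probability density that the return attains $r$ is $f(r)=|\Psi(r)|^2=\Psi^*(r)\Psi(r)$. The expected unrealized volume is therefore the $V$-weighted average against this density,
$$\overline{V}=\int_{-\infty}^{+\infty}V(r)\,f(r)\,\dd r=\int_{-\infty}^{+\infty}V(r)\,\Psi^*(r)\Psi(r)\,\dd r.$$
Since $V(r)$ is a scalar-valued function of $r$ it commutes with $\Psi$, so it may be inserted between $\Psi^*$ and $\Psi$ unambiguously; substituting $V(r)=\frac{\alpha}{2}r^2+\frac{\delta}{4}r^4$ then reproduces \eqref{E:thm2} verbatim, with the sandwiched form $\Psi^*(r)\left(\frac{\alpha}{2}r^2+\frac{\delta}{4}r^4\right)\Psi(r)$ displaying the parallel with the realized-volume operator.

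The main point is conceptual rather than computational: one must justify why $V$ enters as a multiplication operator instead of a differential one. The resolution is entirely a matter of representation. The unrealized volume is intrinsically indexed by the return level $r$ --- it records the SDG accumulated as the price moves out to $r$ --- and in the $r$-representation such a quantity acts by pointwise multiplication, exactly as a potential-energy term does. The realized volume of the previous proposition, being intrinsically indexed by $\omega$, instead became the second-derivative (kinetic-energy-like) operator only after the change of representation. Once this bookkeeping is settled, the result follows immediately from weighting $V(r)$ by the return density $|\Psi(r)|^2$, and together with the previous proposition it furnishes the kinetic-plus-potential structure that motivates the Schr\"{o}dinger-like trading equation.
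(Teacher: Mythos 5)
Your proposal is correct and matches the paper's own proof, which likewise treats $\overline{V}$ as the expectation of $V(r)$ against the density $\Psi^*(r)\Psi(r)$ and uses the commutativity of scalar multiplication inside the integral to write it in the sandwiched form of \eqref{E:thm2}. Your added discussion of why $V(r)$ acts by multiplication in the $r$-representation (versus the differential operator in the previous proposition) is a sound elaboration of the same argument, not a different route.
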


\begin{proof}
	Considering that $\Psi^*(r)\Psi(r)$ represents  the probability density of achieving a return of $r$, and by leveraging the commutative property within the integral and the definition of mathematical expectation, the proposition is evidently valid.
\end{proof}

%

In the market's pursuit of equilibrium, the process of reducing SDG involves the continuous conversion between realized trading volume $Q$ and unrealized potential trading volume $V$. Dynamically, potential trading volume in the market transforms into realized trading volume, leading to a decrease in $V$ and an increase in $Q$. Conversely, a decrease in current trading volume is equivalent to delaying trades to the future, resulting in a decrease in $Q$ and an increase in $V$. While it is not expected that $V + Q$ will be equal at every level of return, on average, $V + Q$ should remain constant. This is determined by the total available trading money in the market, which remains constant under the condition of balanced money inflow and outflow in the market.

Define $\overline{E}:=\overline{Q}+\overline{V}$ as the intrinsic trading volume. The following proposition holds.

\begin{thm} Given the intrinsic trading volume $\overline{E}$, complex-valued probability density of asset returns, denoted as $\Psi(r)$, satisfies the following equation:
\begin{equation}\label{E:xde}
	-\dfrac{h}{2}\dfrac{\dd^2}{\dd r^2} \Psi(r)+\left(\dfrac{\alpha}{2} r^2+\dfrac{\delta}{4} r^4
\right)\Psi(r)= \overline{E} \,\Psi(r)
\end{equation}
\end{thm}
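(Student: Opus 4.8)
The plan is to recognize Equation \eqref{E:xde} as the stationarity (Euler--Lagrange) condition of the intrinsic trading volume regarded as a functional of the state $\Psi$, in exact analogy with how the time-independent Schr\"{o}dinger equation arises from a variational principle. First I would assemble the two preceding propositions: adding Equation \eqref{E:thm1} to Equation \eqref{E:thm2} and using the definition $\overline{E}:=\overline{Q}+\overline{V}$ gives
\begin{equation*}
	\overline{E}=\int_{-\infty}^{+\infty}\Psi^*(r)\,\hat{H}\,\Psi(r)\,\dd r,\qquad \hat{H}:=-\dfrac{h}{2}\dfrac{\dd^2}{\dd r^2}+\dfrac{\alpha}{2}r^2+\dfrac{\delta}{4}r^4,
\end{equation*}
so that $\overline{E}$ is the expectation of the single operator $\hat{H}$ in the state $\Psi$. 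Because $f(r)=|\Psi(r)|^2$ is a probability density, $\Psi$ is normalized, $\int_{-\infty}^{+\infty}\Psi^*(r)\Psi(r)\,\dd r=1$.

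The crux is that this scalar identity alone does \emph{not} force the pointwise relation \eqref{E:xde}, since infinitely many states share a prescribed expectation value. The additional ingredient I would invoke is the equilibrium principle already indicated in the text: the realized market distribution is the one that renders $\overline{E}[\Psi]$ stationary among all admissible (normalized) states, reflecting the conservation of total trading money. I would encode the normalization with a Lagrange multiplier $E$ and form
\begin{equation*}
	\mathcal{L}[\Psi,\Psi^*]=\int_{-\infty}^{+\infty}\Psi^*\hat{H}\Psi\,\dd r-E\left(\int_{-\infty}^{+\infty}\Psi^*\Psi\,\dd r-1\right),
\end{equation*}
treating $\Psi$ and $\Psi^*$ as independent variables.

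Carrying out the variation with respect to $\Psi^*$ and setting $\delta\mathcal{L}/\delta\Psi^*=0$ yields $\hat{H}\Psi=E\Psi$, which is precisely \eqref{E:xde} with the multiplier $E$ in place of $\overline{E}$. To identify the multiplier I would left-multiply by $\Psi^*$, integrate over $r$, and use normalization to obtain $E=\int_{-\infty}^{+\infty}\Psi^*\hat{H}\Psi\,\dd r=\overline{E}$; the variation with respect to $\Psi$ returns the complex-conjugate equation and is consistent because $\hat{H}$ is formally self-adjoint. Establishing that self-adjointness requires integration by parts on the kinetic term $-\tfrac{h}{2}\tfrac{\dd^2}{\dd r^2}$, whose boundary contributions vanish since $\Psi\in L^2$ decays as $r\to\pm\infty$.

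The step I expect to be the main obstacle is \emph{justifying the variational principle itself} in financial rather than physical terms --- arguing convincingly that the market's actual return distribution extremizes the intrinsic trading volume under the money-conservation constraint, rather than merely realizing one admissible value of it. A secondary technical concern is the well-posedness of the resulting eigenproblem: with the potential $\tfrac{\alpha}{2}r^2+\tfrac{\delta}{4}r^4$ and possibly negative $\alpha$ or $\delta$, one must confirm that $\hat{H}$ is bounded below with a discrete spectrum, since it is precisely this discreteness that underwrites the paper's subsequent claim of quantized ``energy levels'' in trading.
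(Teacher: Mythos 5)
Your proposal is mathematically sound but follows a genuinely different route from the paper, and in doing so it exposes (and repairs) the weak point of the paper's own argument. The paper's proof is the direct one: it writes $\overline{Q}+\overline{V}=\overline{E}\cdot 1=\overline{E}\int\Psi^*\Psi\,\dd r$, substitutes Propositions 1 and 2 on the left-hand side to obtain the single integral identity
\begin{equation*}
\int_{-\infty}^{+\infty}\Psi^*(r)\left[\left(-\dfrac{h}{2}\dfrac{\dd^2}{\dd r^2}+\dfrac{\alpha}{2}r^2+\dfrac{\delta}{4}r^4-\overline{E}\right)\Psi(r)\right]\dd r=0,
\end{equation*}
and then asserts that ``$\Psi^*(r)$ is constant for $[r,r+\dd r]$, allowing us to remove the integral sign,'' thereby concluding the pointwise equation. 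That removal step is exactly the non sequitur you anticipated: a vanishing integral does not force a vanishing integrand (the integrand here is not even sign-definite), and infinitely many normalized states share the same expectation value $\overline{E}$. Your variational argument supplies the missing ingredient: postulating that the realized market state renders $\overline{E}[\Psi]$ stationary under the normalization constraint, the Euler--Lagrange condition with multiplier $E$ gives $\hat{H}\Psi=E\Psi$, and pairing with $\Psi^*$ identifies $E=\overline{E}$. What each approach buys: the paper's argument is short and invokes nothing beyond Propositions 1 and 2, but as written it is not a valid inference; yours is rigorous modulo one extra financial postulate (extremality of intrinsic trading volume under money conservation), which you correctly flag as the real burden, and it has the side benefit that the discreteness of the spectrum --- the trading ``energy levels'' the paper relies on in Section 5 --- emerges from a well-posed self-adjoint eigenproblem rather than being asserted. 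The honest conclusion is that the statement, as posed, needs your extra postulate (or some equivalent strengthening, such as requiring the integral identity to hold for all admissible states) in order to be a theorem at all.
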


%

\begin{proof}
	By $\overline{E}:=\overline{Q}+\overline{V}$, we have $\overline{Q}+\overline{V}=\overline{E}\cdot 1=\overline{E}\displaystyle\int_{-\infty}^{+\infty}\Psi^*(r)\Psi(r)\dd r$. Substituting the results from Proposition 1 and Proposition 2 into the left-hand side, we obtain
	\begin{equation*}
		\int_{-\infty}^{+\infty}\Psi^*(r)\left(-\dfrac{h}{2}\dfrac{\dd^2}{\dd r^2}\right)\Psi(r)\dd r+\int_{-\infty}^{+\infty}\Psi^*(r)\left(\dfrac{\alpha}{2} r^2+\dfrac{\delta}{4} r^4
		\right)\Psi(r)\dd r = \int_{-\infty}^{+\infty}\Psi^*(r)\biggl(\overline{E}\, \Psi(r)\biggr)\dd r
	\end{equation*}
	Note that during integration, $\Psi^*(r)$ is constant for $[r, r+\dd r]$, allowing us to remove the integral sign. So, the proposition proved.
\end{proof}


Equation (\ref{E:xde}) from Proposition 3 provides the differential equation that the complex-valued probability $\Psi(r)$ must satisfy. Solving this equation reveals statistical regularities in the distribution of returns, rendering it a quantum probability theoretic asset return model. Interestingly, it is noteworthy that this equation bears a striking resemblance to the nonlinear harmonic Schrödinger equation. However, in deriving this model, there was no assumption that the variation in returns is analogous to the microscopic motion of particles obeying quantum laws, and the theoretical implications of the parameters has no relation to the meaning of parameters in the Schrödinger equation. Nevertheless, the formulated return equation does share certain analogies with the Schrödinger equation: in the Schrödinger equation, the first term on the left corresponds to 'kinetic energy,' the second term to 'potential energy,' and the term $E$ on the right explains the total energy. From the model derivation, it becomes apparent that the unrealized potential trading volume $V$ in the second term is akin to a form of 'potential energy' determining return variations. It reflects supply-demand imbalances, encapsulating information about changes in asset returns. The first term, realized  trading volume $Q$, acts as the 'kinetic energy' converted from the 'potential energy.' It reflects the total contribution of trading volumes from each $\omega$-market. During the mutual conversion of potential and kinetic energy, the proportion of each $\omega$-market undergoes redistribution, leading to changes in the realized trading volume. Given these dynamics, we call the equation (\ref{E:xde}) as the \emph{Schrödinger-like trading equation}.

%

\section{Model Solving and Empirical Analysis}
\subsection{Solution of the asset return model}
To solve this model, we introduce a non-dimensional transformation $\xi =\left(\dfrac{\alpha }{h}\right)^{\frac{1}{4}}r$ and simultaneously redefine the function $\phi(\xi)=\Psi(r)=\Psi\left(\left[\dfrac{h}{\alpha}\right]^{\frac{1}{4}}\xi\right)$. It can be easily shown that if (\ref{E:xde}) holds, then $\phi(\xi)$ satisfies
\begin{equation}\label{E:xde2}
	-\dfrac{\dd^2 }{\dd \xi^2}\phi(\xi)+[\,\xi^2+\lambda \xi^4\,]\phi(\xi)=\Omega\, \phi(\xi)	
\end{equation}

where $\lambda=\dfrac{\delta}{2\alpha}\sqrt{\dfrac{h}{\alpha}}$ and $\Omega=2({\alpha h})^{-\frac{1}{2}}\overline{E}$ . $\Omega$ can be interpreted as the normalized intrinsic trading volume of the asset.

%

1. When $\lambda=0$, that is, $\delta=\gamma\lambda_c-\lambda_a=0$, equation (\ref{E:xde2}) simplifies to the classical linear harmonic oscillator equation found in quantum mechanics. Research indicates that the equation's solutions converge only when $\Omega$ is an odd integer, i.e., $\displaystyle\int_{-\infty}^{+\infty}|\phi({\xi})|^2\dd \xi <+\infty$, adhering to the probabilistic interpretation.

Let $\Omega_n=2n+1, n=0,1,2,\cdots$ and consider the case when $\Omega=\Omega_0=1$. The equation has a particular solution
, $\phi(\xi) \sim e^{-\frac{\xi^2}{2}}$. Regularized, it is expressed as,
\begin{equation*}
	\Psi(r)=\phi(\xi)=\left[\frac{\left(\alpha h^{-1}\right)^{\frac{1}{4}}}{\sqrt{\pi}}\right]^{\frac{1}{2}}e^{-\frac{\xi^2}{2}} =\left[\frac{\left(\alpha h^{-1}\right)^{\frac{1}{4}}}{\sqrt{\pi}}\right]^{\frac{1}{2}}\exp\left({-
\frac{\sqrt{\alpha h^{-1}}\,r^2}{2}}\right)
\end{equation*}
In accordance with the definition of quantum probability, the probability density function of the yield rate distribution is given by
\begin{equation}
	f(r)=|\Psi(r)|^2=\dfrac{1}{\sqrt{2\pi}\sigma}\exp\left(-\frac{r^2}{2\sigma^2}\right),\qquad \text{where}\,\,\sigma=\sqrt[4]{\dfrac{h}{4\alpha}}
\end{equation}
It is evident that when the normalized intrinsic trading volume of the asset takes its minimum value, $\Omega_0=1$, the return rate follows a classical normal distribution. According to the expression for the standard deviation $\sigma$, the model asserts that the volatility of the return positively depends on $h$ while inversely depends on $\alpha$.



For the general case, $\Omega_n=2n+1$, and the general solution to the equation can be expressed as:
\begin{equation}
	\phi(\xi)=A_n e^{-\frac{\xi^2}{2}} H_n(\xi)
\end{equation}
Here, the function $H_n$ represents the Hermite polynomial, satisfying two recursion relations $\dfrac{\dd H_n}{\dd \xi}=2 n H_{n-1}(\xi)$ and $H_{n+1}-2\xi H_n+2n H_{n-1}=0$. The first several Hermite polynomials are as follows:
\begin{align*}
	H_0=1, \qquad H_1=2\xi, \qquad H_2=4\xi^2-2,\qquad H_3=8\xi^3-12\xi,\qquad H_4=16\xi^4-48\xi^2+12
\end{align*}
The coefficient $A_n$ is an undetermined normalization coefficient, ensuring $\displaystyle\int_{-\infty}^{+\infty}|\Psi({\xi})|^2\dd \xi =1$. According to the scaling transformation relationship for the return, $A_n$ can be determined as $A_n=\left[\dfrac{\left({\alpha m}{\,h^{-1}}\right)^{\frac{1}{4}}}{2^n n! \sqrt{\pi}}\right]^{\frac{1}{2}}$.

Figure 4 illustrates the return distribution density functions calculated through the normalized quantum probability modulus for the first six different values of $\Omega$. As observed in the figure, as the intrinsic trading volume in the market increases, the probability distribution gradually shifts from a unimodal distribution corresponding to a normal function to a multimodal distribution. Additionally, for each unit increase in $\Omega$ (an increase of 2), an additional peak is observed in the distribution. Since $\Omega$ can only take discrete values, the intrinsic trading volume $\Omega$ can be considered as the trading \emph{energy level} for asset returns. Higher levels exhibit more pronounced multimodality in the return distribution, and larger peaks are observed at higher return levels. In practice, various factors in market conditions can cause fluctuations in the intrinsic trading volume of assets at different levels, leading to transitions in the return distribution from unimodal to multimodal or vice versa. Thus, over a longer observation period that includes more than one regime shifts, the distribution is composed of mixed distributions corresponding to different intrinsic trading volumes $\Omega$, and the fat-tailed characteristics arise from the distributions generated by larger $\Omega$.


\begin{figure}[H]
\centering
  \includegraphics[width=14cm]{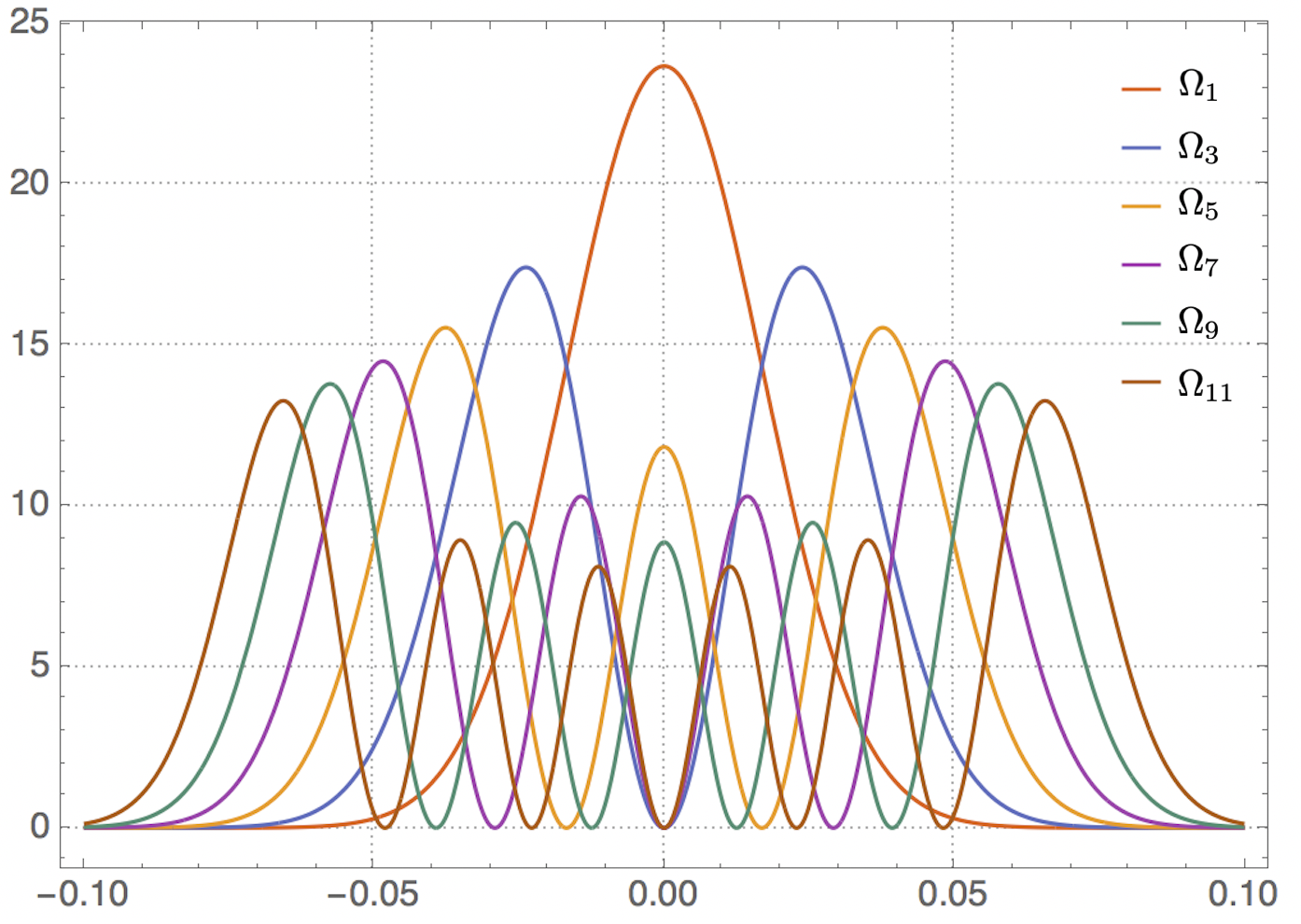}
  \caption{The probability density function of returns, as solved by the theoretical model, with different $\Omega$ corresponding to distinct distribution patterns.}
\end{figure}

2. When $\lambda \neq 0$, equation (\ref{E:xde2}) transforms into the so-called anharmonic equation, which lacks an analytical solution.  However, if the solutions of this equation satisfy the convergence requirement for probability, $\displaystyle\int_{-\infty}^{+\infty}|\phi({\xi})|^2\dd \xi <+\infty$, then $\Omega$ can only take discrete values, satisfying \cite{dasgupta2007}  
\begin{equation}\label{E:3cifang}
	\left(\dfrac{\Omega_n}{2n+1}\right)^3-\left(\dfrac{\Omega_n}{2n+1}\right)=\dfrac{4}{3}\left(1+\dfrac{2n}{3}\right)\lambda,\qquad n=0,1,2,\cdots
\end{equation}
Clearly, $\lambda=0$ yields $\Omega_n=2n+1$. However, when $\lambda \neq 0$, $\Omega$ is not an integer. It is necessary to solve the cubic equation (\ref{E:3cifang}) to find positive real solutions. The difference between two adjacent energy levels is not constant: if $\lambda > 0$, then $\Omega_0>1$ and $\Omega_{n+1}-\Omega_{n}>2$ for all $n\ge0$, and this difference increases with the growth of $n$; conversely, if $\lambda < 0$, then $\Omega_0<1$ and $\Omega_{n+1}-\Omega_{n}<2$ for all $n\ge0$, and this difference gradually decreases with the growth of $n$.

\subsection{Empirical results}
According to the proposed asset return model in this paper, when the standardized intrinsic trading volume transitions from $\Omega_0$ to $\Omega>\Omega_0$ (i.e., from $\overline{E}_0=\sqrt{\dfrac{\alpha h}{4}}\Omega_0$ to $\overline{E}_n=\sqrt{\dfrac{\alpha h}{4}}\Omega_n,\,n>0$), the distribution of returns undergoes a ``phase transition" from a unimodal distribution to a multimodal distribution. Therefore, the empirical work focus of the model is on comparing whether the distribution of returns shifts from unimodal to multimodal at different energy levels between $\Omega_0$ and $\Omega \neq \Omega_0$.


\begin{figure}[t]
\centering
\includegraphics[width=16cm]{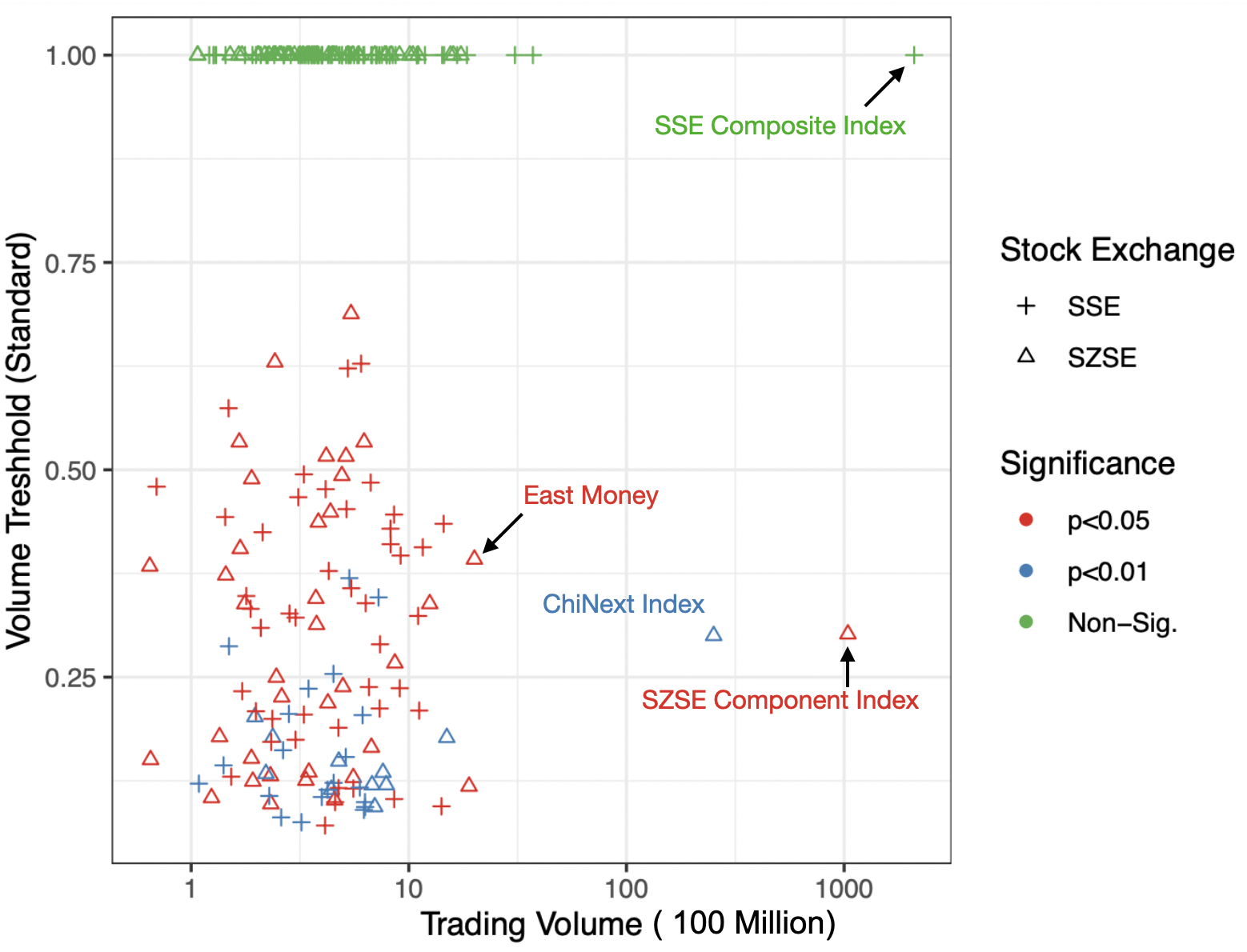}
\caption{Trading 'ground-state energy levels' estimated using the multimodal test method for constituents of the Shanghai and Shenzhen 300 Index and three major stock indices, represented by the normalized trading volume threshold.}
\end{figure}


We selected the constituent stocks of the CSI 300 Index as the focused empirical assets for our model, with data recorded until December 31, 2020. Additionally, we included the three major stock indices commonly used in the A-share market in China—the SSE Composite Index, the SZSE Component Index, and the ChiNext Index—in our calculations, as each stock index can be viewed as an equivalent investment portfolio to its corresponding ETF. The time window spans from January 1, 2011, to December 31, 2020, covering a total of 2432 trading days over the course of a decade. To ensure the robustness of the return distribution, we excluded component stocks with less than four years of trading data, resulting in a total of 234 assets (231 individual stocks and three indices).

Due to the non-stop continuous trading system with opening and closing call auctions in the stock market (which is not active 24 hours a day), price differences arising from non-trading periods can lead to disparities between the previous day's closing price and the opening price on the following day. Taking these factors into account, this study employed daily logarithmic returns, calculated as the difference between the logarithmic closing price and the logarithmic opening price for each day.
For each individual asset, there is one corresponding return data and one trading volume data (corresponding trading value) for each trading day. Although trading volume may not equal the intrinsic trading volume, with a sufficiently large sample size, the mean of trading volume can approximate the intrinsic trading volume.
Given that the model parameters $\sqrt{\dfrac{\alpha h}{4}}$, $\lambda$, and $\overline{E}_0, \overline{E}_1,\cdots$ are all unknown, and considering the multiple unknown parameters, we employed the following algorithm to conduct hypothesis testing for the overall model. For each asset, the specific steps are as follows:

\begin{enumerate}
	\item Let $V^{\text{max}}$ and $V^{\text{min}}$ denote the maximum and minimum daily trading volumes in the empirical window, respectively. Select the 5\% percentile $V^* = V^{\text{min}} + \frac{V^{\text{max}}-V^{\text{min}}}{20}$ as the starting point for the test. Define the subset $\mathscr{G}$ as the set of trading days with volumes less than $V^*$ and $\mathscr{F}$ as the set of trading days with volumes greater than or equal to $V^*$.
	\item Compute the distribution of asset returns within subset $\mathscr{F}$ and conduct a multimodal test using the HY algorithm \cite{cheng1998, HYtest}. Calculate the HY statistic, where $H_0$ represents a unimodal distribution, and $H_1$ suggests the presence of at least two density modes. For statistical significance, conduct 100 Monte Carlo resamplings of returns within each subset $\mathscr{F}$ to derive a p-value. If the p-value is less than $0.05$, identify it as the trading volume threshold for the transition from a unimodal to multimodal distribution, defining $V^*$ as the minimum ``ground-state energy level"$\overline{E}_0$. The remaining energy levels satisfy $V^*<\overline{E}_1<\overline{E}_2<\cdots$.
	\item If the HY test within subset $\mathscr{F}$ cannot be rejected at the $0.05$ significance level, increase the sample splitting point to $V^*\leftarrow V^*+\Delta$, where $\Delta>0$. Based on the updated $V^*$, redivide the subset, obtaining a larger $\mathscr{G}$ and a smaller $\mathscr{F}$. Conduct the hypothesis test in Step 2 with the new subset $\mathscr{F}$.
	\item Repeat Steps 2 and 3 until the subset $\mathscr{F}$, split according to the trading volume threshold, satisfies the HY multimodal test with a p-value<$0.05$. The calculation stops, and $\overline{E}_0=V^*$ is obtained. If, by the time $\mathscr{F}$ contains less than one month (22 days) of trading days, the null hypothesis $H_0$ of a unimodal distribution cannot be rejected, then $\overline{E}_0> V^{\text{max}}$ is determined.
\end{enumerate}

%
%

\begin{figure}[t]
\centering
  \includegraphics[width=16cm]{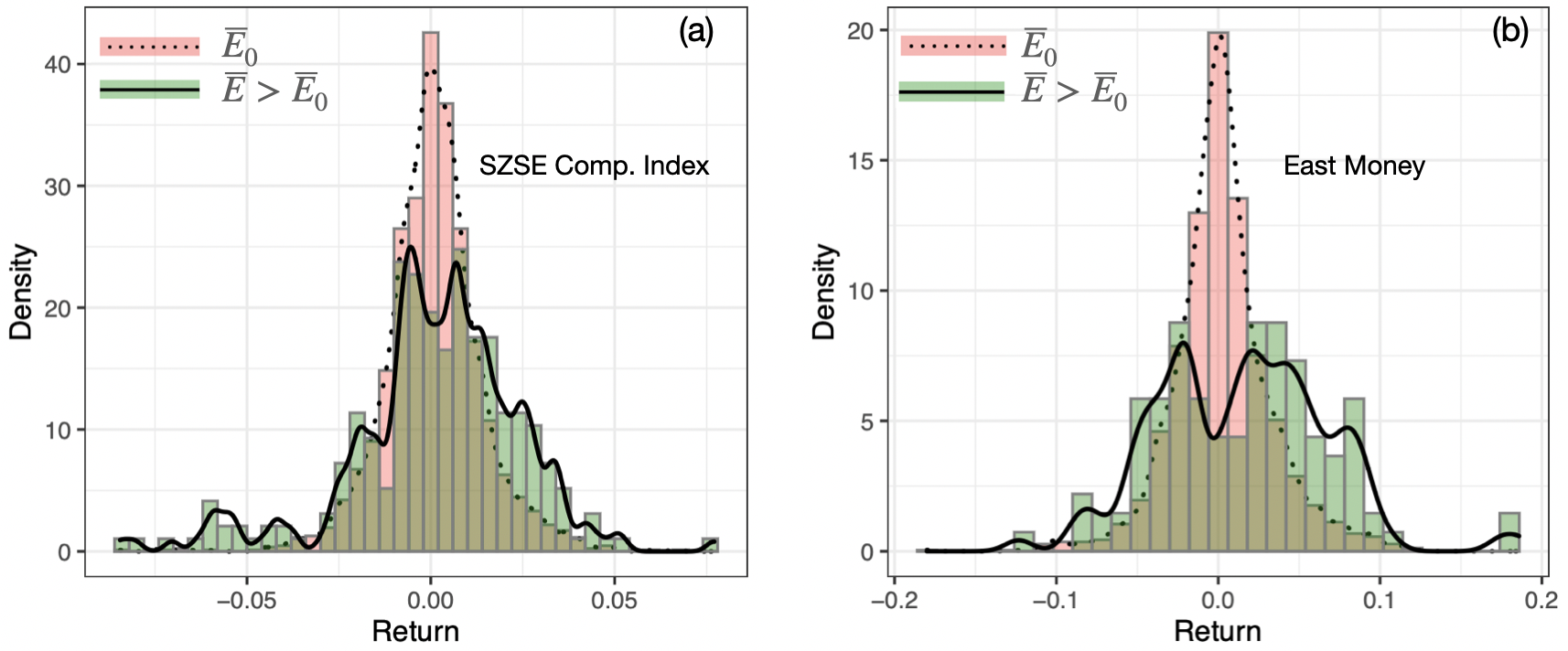}
  \caption{Two selected density functions of financial return distributions are displayed:  (a) SZSE Component Index (b) the stock East Money.}
\end{figure}

Empirical results for 234 assets are presented in Figure 5. The horizontal axis uses the average daily trading volume of each asset during the empirical window as a measure of its trading size (volume). To ensure comparability of the estimated ``ground-state energy level" $\overline{E}_0$ for each asset, the vertical axis is represented by the ratio $\eta=\dfrac{\overline{E}_0}{V^{\text{max}}}$, where $\eta$ is interpreted as the normalized threshold of trading volume for the transition from a unimodal to a multimodal distribution of return probabilities. $\eta\in [0,1]$. A value less than 1 indicates that, within the empirical window, the statistical regularity of the asset's returns is indeed dominated by the trading ``energy level", representing higher trading volumes leading to the multimodal distribution observed in Figure 4. Simultaneously, a smaller ratio suggests a smaller comparable "zero energy level" for the asset. Therefore, at the same turnover rate, the asset's return dynamics are more likely to be in a ``high-energy level" presenting a corresponding multimodal distribution. Empirical results indicate that 114 assets have estimated $\eta$ values less than 1, accounting for approximately 49\% of the total number of empirical assets. This suggests that the existence of trading volume levels cannot be regarded solely as a statistical coincidence, and the quantum probability model of return can provide an explanation for this phenomenon. Additionally, Figure 5 shows that the magnitude of $\eta$ does not exhibit a significant correlation with trading volume, indicating that the ``ground-state energy level" of assets lacks an evident characteristic scale. It does not increase with the intrinsic trading volume of the asset. This result aligns with the model: although the standardized level $\Omega_0$ is around 1 (the magnitude of $\lambda$ may deviate from 1), the size of $\overline{E}_0$ is influenced by the characteristic factor $\sqrt{\dfrac{\alpha h}{4}}$ for each asset. From the empirical results, it can be inferred that this factor depends on the trading size o. Moreover, Figure 4 illustrates that the "ground-state energy level" of each asset is independent of the stock exchange.

To scrutinize empirical findings in detail, Figure 6  further selects the return distributions of two assets at different trading "energy levels," which are considered representative. The chosen assets are the SZSE Component Index (code sz399001) and the stock East Money (code sh300059). The selection criterion is that they have the largest trading volumes in their respective groups and exhibit significant HY test results for  multimodality in the empirical window. Figure 5 has marked the corresponding positions of these two assets. From Figure 6, it is evident that both the SZSE Component Index and East Money exhibit unimodal normal distributions when at the trading "ground-state energy level," precisely corresponding to the distribution shape with $\Omega_0=1$ in the theoretical model solution in Figure 4.

However, when not at the "ground-state energy level", the return distributions show pronounced bimodal or even locally multimodal shapes, with the tails of the distributions originating from non-"ground-state energy level." The distribution shapes correspond to the theoretical model solution in Figure 4 with high energy levels, presenting multimodal distributions $\Omega_1=3,\Omega_2=5,\cdots$. These are mixed distributions of the theoretical model. Due to the ease of reaching the bimodal energy level $\Omega=3$, the distributions predominantly exhibit bimodal shapes, with a small proportion mixed from trimodal or higher-order distributions. Therefore, the phase transition in the return distribution shapes presented in Figure 6 can be well-explained by the quantum probability asset return model.

\section{Conclusion}


Quantum probability is essentially a mathematical structure that extends classical probability from real numbers to complex numbers. It is not a privilege exclusive to microscopic quantum phenomena. Quantifying with complex numbers not only conveys the classical probability meaning of the likelihood of a system state occurring through modulus,  but also conveys additional information through phase. This paper demonstrates that this additional information gives quantum probability an inherent advantage over classical probability in characterizing the multimodal distribution of asset returns.  In this study, we \emph{do not} follow the traditional research path of treating changes in asset returns as the motion of microscopic particles, nor do we presuppose that the movement of financial assets exhibits microscopic quantum effects (such as wave-particle duality, uncertainty principle, etc.). Instead, we directly assign financial interpretations to the mathematical structure of quantum probability based on traders' decisions and financial market trading behavior, associating financial meanings with each component of quantum probability. This helps to derive the equations that quantum probability of asset returns should satisfy. In a nutshell, the complex-valued probability amplitude function $\Psi(r)$ is interpreted as active trading intentions (ATI), where the modulus of $\Psi(r)$ quantifies the intensity of ATI, and the phase of $\Psi(r)$ represents the property of ATI. Using the periodic characteristics and operational properties of the phase, we argue that the phase information of $\Psi(r)$ can not only reflect the repeated switching between long and short in trading decisions but also represent the effects of information interaction among traders. The interaction effects have a mathematically concise expression, referred to as the principle of decomposing the market's total ATI.


Leveraging the interpretation of quantum probability with ATI as well as the decomposing principle,  and also with the introduction of the concept of $\omega$-markets, this paper deduces a differential equation concerning the probability amplitude of returns. It is akin to the Schrödinger equation in quantum mechanics. Although we sidestep the presumption of microscopic quantum effects on returns, the obtained equation still shares many similarities with the Schrödinger equation. Hence, it is referred to as the Schrödinger-like trading equation. The essence of the Schrödinger equation in quantum mechanics is an energy conservation equation: kinetic energy + potential energy = total energy. Similarly, the returns equation in this paper is also a conservation equation: realized trading volume + potential trading volume = total trading volume. By analogy, trading volume is naturally interpreted as the 'energy' of the market during trading. Thus, the concept of intrinsic total trading volume is introduced, with kinetic energy corresponding to all realized trading volume. Due to the symmetry of buying and selling, trading can be seen as a two-sided coin, where one side represents active trading in $\omega$-markets, and the other side represents the execution volume of unbalanced orders from passive providers. Potential energy corresponds to potential trading volume, where one side is the unexecuted unbalanced orders of passive providers, and the other side is the trading volume generated by future active trading from $\omega$-markets.


The form of the Schrödinger-like trading equation implies that the intrinsic trading volume of an asset can only take discrete values, indicating the existence of certain ``energy levels" in asset trading. The theoretical analytical solution reveals that, when the market is at the ground-state energy level," the return follows a normal distribution. As the trading state transitions between different trading energy levels, the distribution of its returns undergoes a phase transition, manifested by changes in the number of density peaks. The distribution density peaks increase in number as the state reaches higher energy levels, while the highest peak shifts towards the tail. Empirical results on the constituents of the CSI 300 Index and stock indices data in the Chinese stock market suggest that the multimodality and fat-tail characteristics of return distributions can be well-explained by the quantum probability asset return model. Additionally, the volatility clustering GARCH effect can also be explained by differences in trading energy levels.  Given these findings, the model holds practical implication for risk management: when trading energy levels magnify, the return distribution deviates from normality and becomes thicker. At such times, utilizing a multi-modal distribution corresponding to the appropriate energy level is a reasonable choice for measuring tail-risk. In practice, the probability distribution of returns is a mixture of distributions corresponding to various trading energy levels. Effectively modeling the intrinsic trading quantity, accurately estimating the energy level states of each asset's trading, and obtaining the correct instantaneous return distribution will be an interesting direction for future research in risk management.

\vspace{2cm}
\noindent{\bf Acknowledgments}
\noindent 
We acknowledge financial support  from  the National Natural Science Foundations of China (Grant No.\,71771086)  

\section*{References}

\bibliography{qmr.bib}

\end{document}